\title{Automated Synthesis of a Finite Complexity Ordering for Saturation}
\author{Yannick Chevalier $^1$ ~~~~  Mounira Kourjieh $^2$}
\institute{$^1$ IRIT, Universit{\'e} de Toulouse, France\\ 
$^2$ LORIA-CNRS, Nancy, France}
\newenvironment{proofclaim}{%
  \begin{list}{}{\setlength{\leftmargin}{1em}}%
\item\hspace*{-1em}\textbf{Proof of the claim.}\it}{%
  \end{list}}
\def\make@math@constant#1{%
\expandafter\def\csname #1\endcsname{%
\ensuremath{\operatorname{#1}}}
}
\def\make@math@function#1{%
\expandafter\def\csname #1\endcsname##1{%
\ensuremath{\operatorname{#1}(##1)}}
}
\def\make@math@cal#1{%
\expandafter\def\csname #1\endcsname{%
\ensuremath{\mathcal{#1}}\xspace}
}
\def\make@math@ded#1{%
\expandafter\def\csname D#1\endcsname{%
\ensuremath{\mathcal{D}_{\textrm{\footnotesize #1}}}}
}
\def\make@math@iterator#1#2{\@for\p@arg:=#1\do{%
\edef\arg{\expandafter\@firstofone\p@arg}
\@expandtwoargs{\csname make@math@#2\endcsname}{\arg}\relax
}}
\def\MakeFunctions#1{\make@math@iterator{#1}{function}}
\def\MakeConstants#1{\make@math@iterator{#1}{constant}}
\def\MakeCal#1{\make@math@iterator{#1}{cal}}
\def\MakeDed#1{\make@math@iterator{#1}{ded}}
\newcommand{\TFX}{\ensuremath{\T(\F,\X)}\xspace}
\newcommand{\set}[1]{\ensuremath{\lbrace #1 \rbrace}}
\newcommand{\ded}{\leadsto}
\newcommand{\rew}{\to}
\newcommand{\atomordering}{{\succ_a}}
\newcommand{\termordering}{{\succ_t}}
\newcommand{\Predicate}{\ensuremath{\mathcal{P}}}
\newcommand{\arity}{\ensuremath{arity}}
\begin{document}
\sloppy
\maketitle

\begin{abstract}
  We present in this paper a new procedure to saturate a set of
  clauses with respect to  a well-founded ordering on ground atoms
  such that $A\prec B$ implies $Var{A}\subseteq\Var{B}$ 
  for every atoms $A$ and $B$. This
  condition is satisfied by any atom ordering compatible with a
  lexicographic, recursive, or multiset path ordering on terms. Our
  saturation procedure is based on a priori ordered resolution and its
  main novelty is the on-the-fly construction of a finite complexity
  atom ordering. 
  In contrast with the usual redundancy, we give a new redundancy notion and we
 prove that during the saturation a non-redundant inference by a
 priori ordered resolution is also an inference by a posteriori
 ordered resolution. We also prove that if a set $S$ of clauses  is
  saturated with respect to an atom ordering as described above then the problem
  of whether a clause $C$ is entailed from $S$ is decidable.
\end{abstract}

\section{Introduction}

Resolution is an inference rule introduced by Robinson~\cite{robinson}
for theorem proving in first-order logic. It consists in saturating a
theory presented by a finite set of disjunctions, called clauses, with
all its consequences.
%   Though we refer the reader
% to~\cite{resolution-history} for a complete presentation, let us
% mention that it is correct and complete for refutation:
% for any set $S$ of clauses, if $S$ is unsatisfiable then the application of 
% the resolution on $S$ outputs an empty clause,   
% %. \textit{I.e.}
% %from every unsatisfiable set of clauses resolution will eventually to
% %the empty clause, 
% that is a clause that can never be satisfied.
% 
Since the seminal work of Robinson,  lot of efforts have been devoted to
finding strategies that limit the possible inferences but still are
complete for refutation. The correctness of resolution implies the
correctness of these strategies. Among these we note selected
resolution~\cite{BachmairGLS95} and ordered resolution~\cite{Bachmair-Ganzinger90} which are
 correct and 
refutationally complete.
The later being
a special case of \cite{Rusinowitch86}. Later, it was  proved
in~\cite{Basin-Ganzinger-01} that if a set $S$ of clauses is saturated by
ordered resolution (with some additional hypotheses discussed in this
paper) then deciding whether a clause $C$ is a
consequence of $S$ is decidable. We present in this paper a weakening
of the hypotheses assumed in~\cite{Basin-Ganzinger-01} that also enjoys 
this decidability property.  In ~\cite{Basin-Ganzinger-01}, it is proved that saturated sets of clauses
are order local, and thus 
if each atom has a finite number of smaller atoms 
then the ground entailment problem is decidable.
Orders having this property are said to be of \emph{finite complexity}.

We present in this paper a variant of the standard saturation procedure
that builds during saturation an \emph{atom rewriting system}. This rewriting
system defines a partial ordering on ground atoms that has a finite complexity. 
Under our redundancy notion, we prove that the saturation of a set $S$ of clauses
implies its locality (as in \cite{Basin-Ganzinger-01}) with respect to 
the ordering based on the \textit{atom rewriting system}. As a consequence, if a set $S$ 
of clauses  is
  saturated with respect to an atom ordering as described above then the problem
  of whether a clause $C$ is entailed from $S$ is decidable. Finally we prove 
that the conditions imposed on the atom ordering  are satisfied by all atom 
ordering compatible with a  well-founded, monotone, and subterm term ordering,
\textit{i.e.}, most of the standard term orderings.

% 
% \paragraph{Related works.} We remark that ordered resolution can be
% used either at a practical level to implement efficiently theorem
% provers~\cite{vampire,saturate,spass} or at a theoretical level by
% exploiting the fact that saturated theories are \emph{local}, and
% using the locality property to derive modular reasoning procedures
% (see among
% others~\cite{DBLP:conf/frocos/Sofronie-Stokkermans07,DBLP:conf/cav/Jacobs09}).

\paragraph{Outline of this paper.} 
In Section \ref{sec:preliminaries}, we present the basic notions that we will use later 
in the paper, in Section \ref{sec:old-decidability-result}
we present some of the decidability results for the \textit{ground entailment problem} given in the literature,
in Section \ref{sec:result} we present
our definitions  of \textit{atom rewriting system}, \textit{locality} and \textit{redundancy}, 
in Section 
\ref{subsec:sat:saturation} we give 
our \textit{saturation procedure},  in Section \ref{sec:decidability} we give 
our decidability result, and in  Section \ref{sec:comp}
we show 
 how our result extends the results presented in Section \ref{sec:old-decidability-result}.
% , we present a comparison with related works. 

\section{Formal setting}\label{sec:preliminaries}
\subsection{Basic notions}\label{subsec:basic-notions}
\paragraph{Syntax.}

We assume that we have an infinite set of variables $\X$, an infinite  set of constant symbols
 $\C$, a set of predicate symbols  $\Predicate$ and a set of function symbols $\F$. We associate  the function   
$\arity$ to function symbols and predicate symbols, 
$\arity: \F\cup\Predicate\rightarrow \mathbb{N}$.
The arity of a function symbol (respectively predicate symbol) indicates the number of arguments that 
the function symbol (respectively the  predicate symbol) expects.
We define the set of \textit{terms} $\TFX$ as follows:
$\X,\C\subseteq\TFX$, and for each function symbol $f\in\F$ with arity $n\geq 0$, for each terms 
$t_1,\ldots,t_n\in\TFX$, we have   
$f(t_1,\ldots,t_n)\in\TFX$.   We denote by $\Var{t}$ the set of variables 
occurring in the term  $t$, and a term $t$ is said to be \textit{ground} if $\Var{t}=\emptyset$.
We define subterms of a term $t$, denoted $Sub(t)$, as follows: if $t$ is a constant or a variable
then
$Sub(t)=\{t\}$, if $t=f(t_1,\ldots,t_n)$ then 
$Sub(t)=\{t\}\cup \bigcup_{i\in\{1,\ldots,n\}} Sub(t_i)$.
We denote by $t[s]$ a term $t$ containing $s$ as subterm.  
We define \textit{atoms}  as follows: if $I$ is a predicate symbol 
in $\Predicate$ with arity $n\geq 0$ and $t_1,\ldots,t_n$ are terms in $\TFX$ then $I(t_1,\ldots,t_n)$ is an atom.
A \textit{literal} $L$ is either  $A$ or $\neg A$ 
where $A$ is an atom and $\neg$ denotes the \textit{negation}. The literal $L$ is 
a positive literal in the first case, and a negative literal in the second. 
We denote by $\Var{A}$ the set of variables 
occurring in the atom $A$ and an atom  $A$ is said to be \textit{ground} if $\Var{A}=\emptyset$.

A \textit{clause} \textit{(or full clause)} is defined by  a set of literals 
$\{\neg A_1,\ldots,\neg A_m,B_1,\ldots, B_n\}$. 
It may also be viewed as a formula of the form
$\Gamma\rightarrow \Delta$ where $\Gamma=\{A_1,\ldots,A_m\}$ 
and $\Delta=\{B_1,\ldots,B_n\}$; 
$\Gamma$ represents the \textit{antecedent of the clause} and $\Delta$ its \textit{succedent}.
We denote $Atoms(C)$ the set of atoms occurring in the clause $C$.
A clause is \textit{ground} if all its atoms are ground. 
A clause $\Gamma\rightarrow \Delta$ is \textit{Horn} when $\Delta$ is a singleton or empty, and 
 \textit{unit} when it has only one literal.
A clause $\Gamma\rightarrow \Delta$ is \textit{positive} when it has only a succedent, 
$i.e.$ $\Gamma=\emptyset$ and  is \textit{negative} when it has only
an antecedent, $i.e.$ $\Delta=\emptyset$.  
We  write $\Gamma_1,\Gamma_2$ to indicate the union of the two sets 
$\Gamma_1$ and $\Gamma_2$
and usually omit braces. For example,
 we write $\Gamma,A$ or $A,\Gamma$ for the union of $\{A\}$ and $\Gamma$ or write
 $A_1,\ldots,A_m\rightarrow B_1,\ldots,B_n$ for 
$\{A_1,\ldots, A_m \}\rightarrow\{B_1,\ldots, B_n\}$.
We also make more simplifications, 
for example we write $A$ to denote  the positive unit clause 
$\emptyset\rightarrow A$, and $\neg A$ to denote the  negative unit clause $A\rightarrow\emptyset$.
Let $C$ be a clause,  we denote by $\neg C$ the set of unit clauses $\neg L$
with $L$ a literal in $C$; For example, $\neg C=\{A_1,\ldots, A_m,\neg B_1,\ldots, \neg B_n\}$ 
 when $C=A_1,\ldots,A_m\rightarrow B_1,\ldots, B_n$.
We say that a term $t$ occurs in an atom $A$ if $A$ is of the form $I(\ldots, u,\ldots)$ with $t$ a subterm of $u$ and 
$t$ occurs in a clause if it occurs in an atom of the clause.
% \textcolor{red}{enlever les def. qu'on n'utilise pas !}

\paragraph{Substitutions and unifiers.} 
A substitution $\sigma$ is a partial function   from variables $\X$ to terms 
$\TFX$ such that $Supp(\sigma)=\{ x|\sigma(x)\not=x\}$ is a finite set
 and  $Supp(\sigma)\cap Var(Ran(\sigma))=\emptyset$ with 
$Ran(\sigma)=\{\sigma(x) | x\in Supp(\sigma)\}$.
We denote by $Var(\sigma)$ the set $Var(Ran(\sigma))$.
A substitution $\sigma$ with $Supp(\sigma)=\emptyset$ is called the
\textit{empty substitution} or the \textit{identity substitution}.
A substitution $\sigma$ is said to be \textit{ground} 
if $Var(\sigma)=\emptyset$, that is $Ran(\sigma)$ is a set of ground terms. 
A \textit{renaming} $\rho$ is an injective substitution such that  $Ran(\rho) \subseteq \X$.
A  substitution $\sigma$ is \emph{more general} than a
substitution $\tau$, and we note $\sigma\le\tau$, if there exists a substitution $\theta$ such that
$\sigma\theta=\tau$. 
 Equivalent substitutions, \textit{i.e.}
substitutions $\sigma$ and $\tau$ such that $\sigma\le\tau$ and
$\tau\le\sigma$ are said to be equal up to renaming since in that
case the substitution $\theta$ is a bijective mapping from variables
to variables.
If $M$ is an  expression (i.e. a term, an atom, a clause, or a set of such objects)
and $\sigma$ is a substitution, then $M\sigma$  
is obtained by applying $\sigma$ to $M$ as usually defined; 
We say that $M\sigma$ is an \textit{instance} of $M$ and if $M\sigma$ is ground we say that $\sigma$ is 
\textit{grounding} $M$.

A substitution $\sigma$ is said to be a \emph{unifier} of two elements 
(\textit{i.e.} terms or atoms)
$e_1,e_2$ if  $e_1\sigma = e_2\sigma$. 
We denote $\Unif{e_1,e_2}$
the set of unifiers of the two elements $e_1$ and $e_2$.
  It is well-known that whenever the set $\Unif{e_1,e_2}$
is not empty it has a unique minimal element up to renaming. This minimal element is  called
the \emph{most general unifier} of $e_1$ and $e_2$, and is denoted
\mgu{e_1,e_2}.

\paragraph{Orderings.}
A (strict) ordering $\succ$ on a set of  elements  $E$ is a transitive and irreflexive binary relation on $E$.
The ordering $\succ$ is said to be:
\begin{itemize}
\item
\textit{total} if
for any two different elements $e,e'\in E$, we have either $e\succ e'$ or $e'\succ e$;
\item 
\textit{well-founded}  if   there is no infinite descending chain $e\succ e_1\succ  \ldots$ for any element  
$e$ in $E$; 
\item \textit{monotone} if $e\succ e'$ then $e\sigma \succ  e'\sigma$ for any  elements $e, e'$ in $E$ 
and any substitution $\sigma$.
% \item \textit{stable:}  if $e\succ e'$ then $u[e]\succ  u[e']$ for any elements $u, e$  and $e'$ in $E$;
% \item 
% \textit{subterm:} if  $e[e']\succ e'$ for any elements $e, e'$ in $E$;
% \item\textit{complete:} if it is total over ground elements of $E$;
% \item\textit{simplification:} if it is well-founded, monotone, stable, and subterm.
\end{itemize}
% \textcolor{red}{enlever celui qu'on n'utilise pas}
Any ordering $\succ$ on a set of elements $E$ can be extended to an
ordering $\succ^{set}$ on finite sets over $E$ as follows: if $\eta_1$
and $\eta_2$ are two finite sets over $E$, we have
$\eta_1\succ^{set}\eta_2$ if (i) $\eta_1\not=\eta_2$ and (ii) for
every $e\in \eta_2\setminus \eta_1$ then there is
$e'\in\eta_1\setminus \eta_2$ such that $e'\succ e$.  
Given a set $\eta_1$, a
smaller set $\eta_2$ is obtained by replacing an element 
in $\eta_1$ by a (possibly empty)
finite set of strictly smaller elements.  We call an element $e$
\textit{maximal} (respectively \textit{strictly maximal}) with respect
to a set $\eta$ of elements if for any element $e'\in\eta$ we have
$e'\not\succ e$ (respectively $e'\not\succeq e$).  If the ordering
$\succ$ is total (respectively well-founded and  monotone),
so is its set extension.

We denote by an \textit{atom ordering} $\atomordering$ (respectively \textit{term ordering} $\termordering$)
any arbitrary  ordering on atoms (respectively on terms).  
% Given an \textit{atom ordering} $\atomordering$ and 
% a \textit{term ordering} $\termordering$, we say that $\atomordering$ is
% \textit{compatible} 
%  with $\termordering$ if the following condition holds: \\
% for any two atoms $p(t_1,\ldots,t_k)$ and  $q(s_1,\ldots,s_n)$, we have 
% $p(t_1,\ldots,t_k)\atomordering q(s_1,\ldots,s_n)$ if and only if
% for any $s_j$ there exists a  $t_i$ with  $t_i\termordering s_j$.   
We  extend an \textit{atom ordering} $\atomordering$ to a  \textit{clause ordering} as follows:
  we identify 
  clauses  with   the 
  sets of their respective atoms, and we order the clauses
 with respect to the sets of their respective atoms using the ordering $\atomordering^{set}$. 
For example, the clauses $A_1,A_2\rightarrow B$ and $A_1\rightarrow B$ are
 identified respectively with the  following
 sets of atoms $\{A_1,A_2,B\}$ and $\{A_1,B\}$;  
The second set is strictly smaller than the first one with respect to the 
 ordering   $\atomordering^{set}$, and hence the second clause is strictly smaller than the first one.

\begin{center}
  \fbox{\vspace*{8em} ~~\parbox{0.92\linewidth}{
In the remainder of this  paper, we   assume that
the atom ordering $\atomordering$ is  monotone, well-founded, and is such that 
$A\prec_{a} B$ implies $\Var{A}\subseteq \Var{B}$ for every atoms $A$ and $B$.
 }~~}
\end{center}

\subsection{Resolution}
The resolution is an inference rule   introduced by Robinson \cite{robinson};
It is one of the most successful methods for automated proof search in 
 first-order logic.
% It consists of saturating a theory represented by a finite set of clauses with all its consequences, 
% that is  systematically applying 
% the resolution  until no more new clauses can be added.
We say that a set $S$ of clauses is  \textit{unsatisfiable}
if there is no \textit{Herbrand interpretation}  satisfying it, 
and \textit{satisfiable} otherwise. 
Given a set $S$ of clauses and a ground clause $C$,
$S\models C$ means that 
$C$ is true in every Herbrand model of $S$;
It is easy to see that $S\models C$ iff $S\cup\neg C$ is unsatisfiable.
% , \textit{i.e.}
% there is no Herbrand model that satisfies $S\cup\neg C$.
A \textit{proof by refutation} of $S\models C$ consists in proving that $S\cup\neg C$ is unsatisfiable.
The resolution has been proved in \cite{robinson} to be correct and complete for refutation.
The correctness of the resolution means that  the empty clause (\textit{i.e.} a contradiction) 
 can not be derived from any satisfiable set of clauses, and the completeness means that  
 the empty clause 
 can be derived from any unsatisfiable set of clauses. 
% A set of clauses $S$ is said to be \textit{unsatisfiable}
% if there is no \textit{Herbrand interpretation}  satisfying $S$, 
% and \textit{satisfiable} otherwise. 

 The \textit{resolution}  is
 described by the  two inference rules given in  Fig.~\ref{fig:inferences}. 
 The clause $(\Gamma,\Gamma'\rightarrow
 \Delta,\Delta')\alpha$ of the resolution rule is called the
 \textit{resolvent} of the premises
  ($\Gamma\rightarrow \Delta,A$ and $A',\Gamma'\rightarrow \Delta'$) or the \textit{conclusion} of the
 inference, and the atom $A\alpha$ is called the \textit{resolved}
 atom. In the factoring rule, the clause $(\Gamma\rightarrow
 \Delta,A)\alpha$ is called the \textit{factor} of the premise 
 ($\Gamma\to \Delta,A,A'$)
or the
 \textit{conclusion} of the inference, and the atom $A\alpha$ is
 called the \textit{factored} atom.

\begin{figure}
  \centering
      \subfloat[\textit{Resolution rule.}]{\parbox{0.45\linewidth}{
        $$
        \infer{(\Gamma,\Gamma'\rightarrow \Delta,\Delta')\alpha}
        {\Gamma\rightarrow \Delta,A ~~~~~~~~ A',\Gamma'\rightarrow \Delta'}
        $$\footnotesize
        where $\alpha= \mgu{A,A'}$.}}\qquad\qquad
    \subfloat[\textit{Factoring  rule.}]{\parbox{0.44\linewidth}{
        $$\infer{(\Gamma\rightarrow \Delta,A)\alpha}{\Gamma\rightarrow \Delta,A, A'}
        $$\footnotesize
        where $\alpha= \mgu{A,A'}$.}}
    \caption{Standard resolution and factoring rules}
  \label{fig:inferences}
\end{figure}

\paragraph{Ordered resolution.} 
Since the seminal work of Robinson \cite{robinson}
lot of efforts have been devoted to
finding strategies that limit the possible inferences but still are
complete for refutation and correct; The correctness of these strategies 
is obtained from the correctness of the resolution.
 Among these strategies, there is 
the ordered resolution \cite{BachmairG90}
which is used in this paper and will be presented in this paragraph.

% \textcolor{red}{definir les regles de resolution autrement pour prendre moins de places}

The ordered resolution makes use of an atom ordering $\atomordering$ and 
is described by 
two inference rules: \textit{ordered factoring rule} and \textit{ordered resolution rule}.
We  distinguish two types of 
 \textit{ordered resolution}: the  
\textit{posteriori ordered resolution} and the
\textit{priori ordered resolution}. 

\begin{figure}
  \centering
 %   \subfloat[caption]{fig}
      \subfloat[Posteriori ordered resolution
    rule.]{\parbox{0.45\linewidth}{
        $$
        \infer{(\Gamma,\Gamma'\rightarrow \Delta,\Delta')\alpha}{\Gamma\rightarrow \Delta,A 
          ~~~~~~~~ A',\Gamma'\rightarrow \Delta'}
        $$\footnotesize
        where $\alpha= \mgu{A,A'}$,
        $A\alpha$ is strictly maximal with respect to $\Gamma\alpha, ~
        \Delta\alpha$ for $\atomordering$, and $A\alpha$ is maximal
        with respect to $\Gamma'\alpha, ~ \Delta'\alpha$ for
        $\atomordering$.}}\qquad\qquad
    \subfloat[\textit{Posteriori ordered factoring
      rule.}]{\parbox{0.44\linewidth}{
        $$\infer{(\Gamma\rightarrow \Delta,A)\alpha}{\Gamma\rightarrow \Delta,A,A'}
        $$\footnotesize
        where $\alpha= \mgu{A,A'}$, $A\alpha$ is strictly maximal with
        respect to $\Gamma\alpha$ for $\atomordering$, and maximal
        with respect to $\Delta\alpha$ for $\atomordering$.}}

    \subfloat[\textit{Priori ordered resolution  rule.}]{\parbox{0.45\linewidth}{$$
        \infer{(\Gamma,\Gamma'\rightarrow
          \Delta,\Delta')\alpha}{\Gamma\rightarrow \Delta,A ~~~~~~~~
          A',\Gamma'\rightarrow \Delta'}
        $$\footnotesize
        where $\alpha= \mgu{A,A'}$, $A$ is maximal with respect to
        $\Gamma, ~ \Delta$ for $\atomordering$, and $A'$ is maximal with
        respect to $\Gamma', ~ \Delta'$ for $\atomordering$.}}\qquad\qquad
    \subfloat[\textit{Priori ordered factoring rule.}]{\parbox{0.44\linewidth}{$$
        \infer{(\Gamma\rightarrow \Delta,A)\alpha}{\Gamma\rightarrow \Delta,A,A'}
        $$\footnotesize 
        where $\alpha= \mgu{A,A'}$, $A$ is maximal with respect to $\Gamma$
        and $\Delta$ for $\atomordering$.}}
    \caption{Posteriori and priori ordered resolution and factoring
      rules.}
  \label{fig:ordered:inferences}
\end{figure}

\paragraph{Remarks.} 
\begin{enumerate}
\item
 The \textit{posteriori} ordered resolution is 
actually the \textit{ordered} resolution introduced in \cite{BachmairG90} and 
the \textit{priori} ordered resolution is related to the  \textit{selected resolution}
which is widely studied in the literature \cite{Lynch03}.
\item
We remark  that the two types of ordered resolution  coincide on ground
clauses, but  not on  non-ground clauses:
 let us consider the following two clauses
$C=I(b,y)\rightarrow I(x,y)$ and $D= I(a,b)\rightarrow \emptyset$ and the ordering:
$I(a,b)\prec_a I(b,b)$.
We have $\sigma=\{x \mapsto a, y\mapsto b\}$
is the most general unifier of $I(x,y)$ and $I(a,b)$. We remark that 
$I(a,b)$
is maximal with respect to $\emptyset$, 
$I(x,y)$ and  $I(b,y)$ are not comparable and hence $I(x,y)$
is  maximal with respect to  $I(b,y)$. This implies that the \textit{priori} ordered
resolution inference rule can be applied to the clauses $C$ and $D$ but not the 
\textit{posteriori} ordered inference rule  since 
$I(x,y)\sigma=I(a,b) \prec_a I(b,b)$.
We  remark that  
in the case of monotone atom ordering as we consider in this paper, 
the \textit{posteriori ordered resolution} is included in the \textit{priori ordered resolution}.
\end{enumerate}
In spite of this difference between  priori and  posteriori 
ordered resolution, we introduce a  redundancy notion such that every non-redundant priori ordered resolution inference 
is a posteriori ordered resolution inference (see Lemma \ref{lem:redundancy}).
%  that such
% problematic cases are eliminated given  the notion of redundancy  we employ.

It is well-known that the \textit{posteriori} ordered resolution and the \textit{priori}
ordered resolution are correct and complete \cite{robinson,BachmairG90}. 

\paragraph{Ground entailment problem.}
The \textit{ground entailment problem} 
studied in this paper is defined as follow:
\begin{center}
   \fbox{\vspace*{8em} ~~\parbox{0.92\linewidth}{
 Given a set $S$ of clauses, the \textit{ground entailment problem for $S$} is defined as follows: 
\begin{description}
 \item \textbf{Input:}  a ground clause  $C$.
 \item \textbf{Output:}  "entailed" if and only if  $S\models C$. 
 \end{description}
 }~~}
\end{center}
% It is easy to see that  $S\models C$ if and only if $S\cup \neg C$ is unsatisfiable (\textit{i.e.} there
%  is no Herbrand model that satisfies $S\cup \neg C$)  for any set $S$ of clauses and any ground clause $C$.

\section{Decidable fragments of first order logic}\label{sec:old-decidability-result}
It is known that the ground entailment problem for Horn clauses and full clauses sets is undecidable
in general. 
Here, we mention  decidability results for some fragments. 
  
\subsection{McAllester's result}
In \cite{McAllester-93},
D. McAllester was interested by  Horn clauses.
He first defined the  subterm locality as follows: 
 a set $S$ of Horn clauses  is \textit{subterm local} 
if for every ground Horn clause $C$, we have $S\models C$ if and only if 
 $C$ is entailed from a set of ground instances of clauses in $S$
in which each term is a subterm of a ground term in $S$ or in  $C$.
It is  proved in \cite{McAllester-93} that if a set $S$ of Horn clauses
is finite and subterm local then its ground entailment problem is  decidable.
% S \models C is polynomial time decidable inthe size of C.
 
\subsection{Basin and Ganzinger results}
In \cite{Basin-Ganzinger-01}, D. Basin and H. Ganzinger  generalized McAllester's result
 by
 allowing monotone, total,  well-founded ordering 
over terms, 
and full (not Horn) clauses. 
To this end, they introduced several notions and we recall next some of them.
A set of clauses $S$ is said to be \textit{order local} with respect to a term ordering $\succ_t$ 
if for every ground clause $C$, we have $S\models C$
if and only if $C$ is entailed from  a set of ground instances of clauses in $S$
in which each  term is smaller than or equal to some  term in $C$.
It is proved in \cite{Basin-Ganzinger-01} 
that if a set $S$ of clauses 
is saturated up to redundancy by posteriori ordered resolution for 
a total, monotone, well-founded atom ordering then $S$ is order local.

A term ordering $\succ_t$ is said to be of complexity $f,g,$ whenever for each clause of size 
$n$ (the size of a  term is the   number of nodes in its tree representation and the size of
 a  clause is the sum of sizes of its terms) there exists $O(f(n))$ terms that are smaller than or
 equal  to a term in the clause, and that they  may be enumerated in time $g(n)$.
% It is easy to see that if $\succ_t$ is of complexity $f,g,$ then each ground term has  finitely many 
% smaller terms that may be enumerated in finite time.
D. Basin and H. Ganzinger obtained the following decidability results:
\begin{enumerate}
% \item
% If $S$ is a set of Horn clauses that is order local with respect to 
% a term ordering $\succ_t$ of complexity $f,g$ then the ground entailment problem for $S$
% is decidable.
\item
If $S$ is a set of (full) clauses that is order local with respect to a term ordering $\succ_t$ of 
complexity $f,g$ then the ground entailment problem for $S$ is decidable. 
\item
 If $S$ is a  set of (full) clauses saturated up to redundancy  by posteriori ordered resolution with respect 
to a complete  well-founded atom ordering, and if, for each clause in $S$, each of its maximal atoms 
contains all the variables of the clause, then the ground entailment problem for $S$ is decidable.
\item
However, if the restriction on the variables
in maximal atoms (the condition in the previous  point) is removed, the ground entailment problem becomes undecidable in general.
\end{enumerate}
We prove in this paper that it is possible to partially remove the condition on variables mentioned above
while keeping the decidability of the ground entailment problem.
More precisely: 
we prove in Theorem 1  the decidability of the ground entailment problem
for  $S$ 
when 
$S$ is a finite saturated set of clauses with respect to  an atom ordering which is  
  well-founded, monotone and  such
  that $A\prec_a B$
  implies $\Var{A}\subseteq\Var{B}$ for every atoms $A$ and $B$.

The next three  sections are  devoted to  this result.

\section{Locality and redundancy}\label{sec:result}
We introduce an atom rewriting system  to model a new ordering relation
between atoms. 
Our goal is to restrict the atom ordering $\prec_a$
to an ordering $\prec_\R$ such that 
each atom has only a finite number of predecessors.

\begin{definition}(Rewriting system on atoms.) 
 Given an atom ordering $\atomordering$, we define a 
  \emph{rewriting system \R on atoms} as  a
  set of rules $L\rew R$ where $L$ and $R$ are two atoms with
  $L\succeq_a R$.
\end{definition}
% Rewriting systems on atoms behave as rewriting systems on terms.
%   Let \R be a rewriting system on atoms,   
% $A\rew_{\R} B$ means that the atom $A$ rewrites to the atom $B$ by  ~\R, \textit{i.e.}
% there is a rewriting rule $L\to R \in \R$ such that $A=L\sigma$ and $B=R\sigma$. 
% The difference between rewriting systems on atoms and rewriting system on terms is that in the first case 
% the rewriting is done on any position of the term while in the second the rewriting is done only on the initial position.

We give next some definitions that we  use later in this section.
\begin{definition}\label{defrelations}
Let $A$ and $B$ be two  atoms,  $C$ be a clause and $\R$ a rewriting system on atoms.  We have:
\begin{itemize}
\item 
 $A\downarrow_{\R}=\{B \text { such that } A\rew^*_{\R} B \}$, \textit{i.e.}
$A\downarrow_{\R}$ denotes the set of atoms reachable from $A$ when applying
rules in \R.
\item
$C\downarrow_{\R}=\{A\downarrow_{\R} \text { such that  $A$ is an atom in $C$ }\}$.
\item
$A\downarrow_{{\R}^-}=A\downarrow_{\R}\setminus\set{A}$. 
\item
$C\downarrow_{{\R}^-}=\{A\downarrow_{{\R}^-} \text { such that   $A$ is an atom  in $C$ }\}$.
\item
$A\prec_{\R} B$ whenever $A\in B\downarrow_{{\R}^-}$.
\end{itemize}\end{definition}

\begin{lemma}\label{PropertiesOfR}
Let $A$ and $B$ be two distinct  atoms.
 We have that   $A\rew_{\R} B$  implies $A \atomordering B$; And  
$A\prec_{\R} B$  implies  $A\prec_a B$ and $\Var A\subseteq\Var B$.\end{lemma}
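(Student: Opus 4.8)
The plan is to prove the two implications in Lemma~\ref{PropertiesOfR} essentially by unfolding the definitions from Definition~\ref{defrelations} and the definition of a rewriting system on atoms, together with the standing global hypothesis on $\atomordering$ (monotone, well-founded, and $A\prec_a B \Rightarrow \Var A\subseteq\Var B$). The first claim, that $A\rew_\R B$ with $A\neq B$ implies $A\atomordering B$, is almost immediate: by definition of the rewriting system $\R$, every rule $L\rew R$ satisfies $L\succeq_a R$, i.e. $L\atomordering R$ or $L=R$. Since we assume $A\neq B$, the equality case is excluded and we get $A\atomordering B$ directly.

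For the second claim I would argue by induction on the length of the rewrite derivation witnessing $A\prec_\R B$. By definition $A\prec_\R B$ means $A\in B\downarrow_{\R^-} = B\downarrow_\R\setminus\set{B}$, so there is a derivation $B\rew^*_\R A$ of length $n\geq 1$ with $A\neq B$. I would set up the induction on $n$, proving the stronger invariant that for any intermediate atom $B'$ appearing strictly below $B$ in the derivation, both $B'\prec_a B$ and $\Var{B'}\subseteq\Var B$ hold. The base case $n=1$ is a single rewrite step $B\rew_\R A$ with $A\neq B$; the first part of the lemma gives $A\atomordering B$, i.e. $A\prec_a B$, and then the standing hypothesis on $\atomordering$ yields $\Var A\subseteq\Var B$. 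For the inductive step, factor the derivation as $B\rew_\R B'\rew^{*}_\R A$; the single step gives $B'\preceq_a B$ and (when $B'\neq B$) $\Var{B'}\subseteq\Var B$, while the induction hypothesis applied to $B'\rew^*_\R A$ gives $A\prec_a B'$ (or $A=B'$) and $\Var A\subseteq\Var{B'}$. Transitivity of $\atomordering$ (it is an ordering, hence transitive) then yields $A\prec_a B$, and transitivity of set inclusion yields $\Var A\subseteq\Var B$.

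One subtlety to handle carefully is the possibility that a rewrite step is trivial, i.e. $L=R$ for a rule applied with no net change, so that an intermediate atom could coincide with $B$ even though the overall derivation is nontrivial. Because $A\in B\downarrow_{\R^-}$ only guarantees $A\neq B$, not that every intermediate atom differs from $B$, I would phrase the monotonicity argument using $\preceq_a$ throughout and only conclude the strict $\prec_a$ at the end from the fact that $A\neq B$ combined with the presence of at least one strictly decreasing step. Equivalently, since $\atomordering$ is a strict order and each step gives $\succeq_a$, the composite gives $B\succeq_a A$; and $A=B$ would contradict $A\in B\downarrow_{\R^-}$, so in fact $B\atomordering A$, i.e. $A\prec_a B$.

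The main obstacle, though a mild one, is keeping the bookkeeping clean across the reflexive-versus-strict distinction: rules only guarantee $\succeq_a$, so a direct induction on $\prec_\R$ risks conflating the non-strict closure $\rew^*_\R$ with the strict relation $\prec_\R$. The cleanest route is to prove first, as an auxiliary observation, that $A\in B\downarrow_\R$ implies $B\succeq_a A$ and $\Var A\subseteq\Var B$ (the non-strict statement, proved by the induction above using $\preceq_a$ and $\subseteq$, both reflexive and transitive), and then obtain the strict version of Lemma~\ref{PropertiesOfR} by simply adding the hypothesis $A\neq B$ to upgrade $\succeq_a$ to $\atomordering$. The variable-inclusion part transfers unchanged since $\subseteq$ is already what we want.
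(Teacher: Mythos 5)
Your overall route is the same as the paper's: unfold Definition~\ref{defrelations} and the definition of a rewriting system on atoms, handle a single rewrite step first, then lift to $\rew^*_{\R}$ by transitivity, and obtain the variable inclusion from the standing hypothesis that $A\prec_a B$ implies $\Var{A}\subseteq\Var{B}$. Your treatment of the multi-step case is in fact more careful than the paper's, which passes from $B\rew^*_{\R}A$ and $A\neq B$ to $B\atomordering A$ in one line; your explicit induction on the derivation length and the reflexive-versus-strict bookkeeping (working with $\succeq_a$ throughout and upgrading at the end using $A\neq B$) are exactly the details the paper leaves implicit.

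The one real elision is in the single-step case. A step $A\rew_{\R}B$ does not mean that $A\rew B$ is itself a rule of $\R$; it means there exist a rule $L\rew R\in\R$ and a substitution $\sigma$ with $A=L\sigma$ and $B=R\sigma$. The definition of $\R$ gives $L\succeq_a R$, and to conclude $A=L\sigma\succeq_a R\sigma=B$ you must invoke the monotonicity of $\atomordering$ (stability under substitution) --- this is the standing hypothesis and it is the only place in the lemma where it is actually used. As written (``the equality case is excluded and we get $A\atomordering B$ directly''), your argument silently identifies the rewritten atoms with the two sides of the rule. You do list monotonicity in your toolbox at the outset, so the repair is a single sentence, but the instantiation step should be made explicit, exactly as the paper does.
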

\begin{proof}
Let $A$ and $B$
be two  distinct atoms such that $A\rew_{\R} B$, 
then there exists a rule $L\to R\in \R$, a substitution $\sigma$
such that $A=L\sigma$ and $B=R\sigma$. 
By definition of $\R$, we have $L \succeq_a R$ and then, by monotonicity of $\succ_a$, 
$L\sigma=A \succeq_a R\sigma=B$.
Since $A$ and $B$ are different, we conclude that $A\succ_a B$. 
Now we assume that $A\prec_{\R} B$, this implies that 
 $A\in B\downarrow_{{\R}^-}$,
and hence $B\rew^*_{\R} A$. Since $A\not= B$ we then have  $B\atomordering A$.
 Since  $A\prec_a B$ implies $\Var A \subseteq \Var B$ (by hypothesis on the ordering $\succ_a$), we then have 
  $A\prec_{\R} B$ implies $\Var A \subseteq \Var B$.
\end{proof}

\begin{lemma}\label{lem:reach-is-finite} 
  Let $\R$ be  a finite  rewriting system on atoms.
 If $A$ is a  ground atom then the set $ A\downarrow_{\R}$ is finite.
\end{lemma}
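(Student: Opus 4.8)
The plan is to proceed by well-founded induction on $A$ with respect to the atom ordering $\atomordering$, resting on two facts: that the rewriting relation is finitely branching on ground atoms, and that every non-trivial rewrite step strictly decreases the atom in $\atomordering$ (the latter being exactly Lemma~\ref{PropertiesOfR}). Well-foundedness of $\atomordering$ then supplies the induction principle, and finiteness of $\R$ supplies the branching bound.

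First I would establish that from a ground atom only finitely many atoms are reachable in one step, and that these are again ground. The key observation is that every rule $L\rew R\in\R$ satisfies $\Var{R}\subseteq\Var{L}$: indeed $L\succeq_a R$ means $L=R$ or $L\succ_a R$, and in the latter case the hypothesis on $\atomordering$ (namely $A\prec_a B$ implies $\Var{A}\subseteq\Var{B}$) yields $\Var{R}\subseteq\Var{L}$. Consequently, if $A=L\sigma$ is ground then the matching substitution $\sigma$ is uniquely determined on $\Var{L}$, and since $\Var{R}\subseteq\Var{L}$ the atom $R\sigma$ is a uniquely determined ground atom. Hence each rule contributes at most one one-step successor of $A$, so the set $\set{B : A\rew_\R B}$ has at most $|\R|$ elements and consists of ground atoms; since $\R$ is finite, this set is finite.

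With these two ingredients the induction is straightforward. Writing $\mathit{succ}(A)=\set{B : A\rew_\R B,\ B\neq A}$ for the set of strict one-step successors, I would first check the decomposition $A\downarrow_\R=\set{A}\cup\bigcup_{B\in\mathit{succ}(A)} B\downarrow_\R$: any atom reachable from $A$ and different from $A$ is reached along a path whose first step that actually changes the atom leaves $A$, and by Lemma~\ref{PropertiesOfR} that step lands in some $B\in\mathit{succ}(A)$ with $B\prec_a A$. By the previous paragraph each such $B$ is ground and $\mathit{succ}(A)$ is finite. The induction hypothesis applied to each $B\in\mathit{succ}(A)$ (legitimate since $B\prec_a A$ and $\atomordering$ is well-founded) gives that each $B\downarrow_\R$ is finite, so $A\downarrow_\R$ is a finite union of finite sets together with the singleton $\set{A}$, and is therefore finite.

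The only delicate point is the finite-branching step, i.e.\ ensuring that a ground atom has only finitely many successors; this is precisely where the variable-inclusion property of $\atomordering$ is essential, since without $\Var{R}\subseteq\Var{L}$ a single rule could rewrite a ground atom into infinitely many distinct ground atoms by instantiating the surplus variables of $R$ arbitrarily. Once strict decrease (Lemma~\ref{PropertiesOfR}) and finiteness of $\R$ are in hand, the decomposition of $A\downarrow_\R$ and the well-founded induction present no further difficulty.
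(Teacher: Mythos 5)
Your proof is correct, and it rests on exactly the same two ingredients as the paper's argument: finite branching of $\rew_\R$ on ground atoms (which, as you rightly emphasise, hinges on $\Var{R}\subseteq\Var{L}$ for each rule $L\rew R$, itself a consequence of the standing hypothesis that $A\prec_a B$ implies $\Var{A}\subseteq\Var{B}$), and the strict $\atomordering$-decrease of non-trivial rewrite steps given by Lemma~\ref{PropertiesOfR} together with well-foundedness of $\atomordering$. Only the packaging differs. The paper forms the reachability graph on $A\downarrow_{\R}$, notes that it is acyclic and finitely branching, and invokes K\"onig's lemma to convert a hypothetically infinite graph into an infinite strictly descending chain of atoms, contradicting well-foundedness; you instead run a well-founded induction on $\prec_a$ over ground atoms, using the decomposition $A\downarrow_{\R}=\set{A}\cup\bigcup_{B\in\mathit{succ}(A)}B\downarrow_{\R}$. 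These are standard duals of one another, and your version is arguably a little cleaner: it avoids K\"onig's lemma altogether and spells out the matching argument (uniqueness of the ground matcher on $\Var{L}$, hence at most $|\R|$ one-step successors, all ground) that the paper merely asserts in the sentence ``each node has a finite number of direct successor nodes.'' Either route establishes the lemma; there is no gap in yours.
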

\begin{proof} 
Let $A$ be a ground atom. 
By Lemma \ref{PropertiesOfR}, we have $A\downarrow_{\R}$ is a set of ground atoms.
Consider that graph $G=(A\downarrow_{\R},V)$ 
where $(D,D')\in V$ if and only if $D\not=D'$
and $D\to_{\R} D'$.
By Definition \ref{defrelations},
 $(D,D')\in V$ implies $D\succ_{\R} D'$. Thus $G$ is acyclic.
Since $\R$ is finite and 
$\Var{R}\subseteq\Var{L}$ for every rule $L\to R\in\R$,
 each node has a finite number of direct successor 
nodes. 
By \textit{$K\ddot{o}nig's$} lemma, 
if the graph $G$ is infinite it has an infinite path. The atoms on this infinite path 
form an infinite strictly decreasing 
sequence of atoms $A\succ_a A_1\succ_a A_2 \succ_a \ldots$
which contradicts the well-foundness of $\succ_a$. 
We then conclude that the graph $G$ is finite, and hence is the set $A\downarrow_\R$. 
\end{proof}

\begin{definition}{(Rewriting system based on a set of
    clauses)\label{def:rewriting:clause}}
  Let $S$ be a  set of clauses. The  rewriting system 
  $\R(S)$ based on 
$S$ is a rewriting system on atoms defined by  the set of rewriting rules $L\rew R$ such that
$L$ and  $R$ are two atoms of $C$ with $C\in S$ and  $L \succeq_a R$.
 \end{definition}
We remark that when  $S$ is finite $\R(S)$ is also finite,
and $S\subseteq S'$ implies 
$\R(S)\subseteq\R(S')$.

We now deviate from the traditional notion of \textit{refutational proof} and define instead the notion of 
\textit{local dag proof}. 
Informally, a \textit{refutational proof} of $S\cup \neg C$ 
for a set $S$ of clauses and a clause $C$ is a tree where leaves are labeled by ground instances of 
clauses in $\{S\cup\neg C\}$, internal nodes 
are labeled by the conclusion of the  resolution applied to the antecedent nodes, and the root is 
labeled by the empty clause. 
In the \textit{dag proof} we introduce an ordering on the nodes such that 
the leaves are minimal and the root is maximal with respect to this new ordering.  
% Furthermore, in the \textit{dag proof} each resolution inference is applied at most once. 

\begin{definition}{(Dag proofs)\label{def:dag:proof}}
Given a set $S$ of clauses, a  clause $C$ and an ordered finite set  of ground clauses $(T,<_T)$.
We  call $(T,<_T)$  a \textit{dag proof} of $S\cup \neg C$ 
if:
\begin{enumerate}
\item
 for any clause $t\in T$,  we have 
  either   $t$ is a ground instance of a clause in $S\cup \neg C$, or 
  there exists $t_1,t_2\in T$ with  $t_1,t_2<_T t$
     and $t$ is the conclusion of the resolution applied to $t_1$ and $t_2$.
\item $T$ 
   contains the empty clause.
\end{enumerate}
When such $(T,<_T)$ exists,  we write  $S \vdash C$.
In a dag proof, each minimal clause with respect to the ordering $<_T$ is called a \textit{leave}. 
\end{definition}

\begin{definition}{(Local dag proofs)\label{def:local}}
Given a set $S$  of clauses,  a clause $C$, an ordered finite set  of ground clauses $(T,<_T)$ 
and a set $\mathcal{A}$  of ground atoms. 
We  call $(T,<_T)$  a \textit{$\mathcal{A}$-local dag proof} of $S\cup \neg C$ 
if $(T,<_T)$ is a \textit{dag proof} of $S\cup \neg C$ and   $Atoms(T)\subseteq \mathcal{A}$.
When such $(T,<_T)$ and $\mathcal{A}$ exist,  we write  $S \vdash_{\mathcal{A}} C$.
\end{definition}

% Now, we   deduce the following lemma.
\begin{lemma}{\label{lem:decidable:local:entailment}}
 Given a finite set $S$ of clauses, a ground clause $C$ and a finite  rewriting system on atoms  
  $\R$, we can decide whether  $S\vdash_{C\downarrow_\R} C$.
\end{lemma}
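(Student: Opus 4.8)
The plan is to reduce the question to a finite reachability (saturation) problem inside a fixed finite set of ground clauses. First I would observe that $\mathcal{A} := C\downarrow_{\R}$ is a \emph{finite} set of ground atoms and is effectively computable. Indeed $C$ is ground, so each atom $A$ occurring in $C$ is ground, and Lemma~\ref{lem:reach-is-finite} guarantees that each $A\downarrow_{\R}$ is finite; since $C$ has finitely many atoms, $\mathcal{A}$ is finite. It is also computable: because $\R$ is finite, $A\downarrow_{\R}$ can be generated by forward search along the finite acyclic rewrite graph used in the proof of Lemma~\ref{lem:reach-is-finite}. Consequently the set $\mathcal{G}$ of all ground clauses whose atoms all lie in $\mathcal{A}$ is finite and can be enumerated.

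Next I would isolate the set $L\subseteq\mathcal{G}$ of admissible \emph{leaves}: the ground clauses in $\mathcal{G}$ that are ground instances of some clause in $S\cup\neg C$. This set is computable since, for each $E\in\mathcal{G}$ and each $D\in S\cup\neg C$, deciding whether $E$ is an instance of $D$ is a decidable matching problem against the ground clause $E$. I would then compute the closure $\overline{L}$ of $L$ under binary resolution \emph{restricted to $\mathcal{G}$}: starting from $L$, repeatedly add every resolvent of two already-present clauses provided the resolvent still belongs to $\mathcal{G}$ (i.e.\ has all its atoms in $\mathcal{A}$). Since $\mathcal{G}$ is finite this fixpoint computation terminates. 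The decision is then simply whether the empty clause belongs to $\overline{L}$.

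The correctness of this procedure, which is the core of the argument, amounts to proving $S\vdash_{\mathcal{A}} C$ iff the empty clause lies in $\overline{L}$. For the direction from $\overline{L}$ to a proof, I would equip $\overline{L}$ with the order given by the stage at which each clause first appears in the fixpoint computation; every stage-$0$ clause is a ground instance of $S\cup\neg C$, every later clause is a resolvent of two strictly earlier ones, and by construction all atoms stay in $\mathcal{A}$, so $(\overline{L},<)$ is an $\mathcal{A}$-local dag proof in the sense of Definitions~\ref{def:dag:proof} and~\ref{def:local}. For the converse, given an $\mathcal{A}$-local dag proof $(T,<_T)$, I would argue by well-founded induction on $<_T$ that $T\subseteq\overline{L}$: each $<_T$-minimal clause is a ground instance of $S\cup\neg C$ with atoms in $\mathcal{A}$, hence lies in $L$; each non-minimal clause is a resolvent of two $<_T$-smaller clauses, which are in $\overline{L}$ by induction, and since it belongs to $T$ its atoms lie in $\mathcal{A}$, so the resolvent is added to $\overline{L}$. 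As $T$ contains the empty clause, the empty clause lies in $\overline{L}$.

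I expect the delicate points, rather than genuine obstacles, to be book-keeping ones: making sure factoring need not be treated separately (on ground clauses viewed as sets of literals it is the identity, so binary resolution suffices), and checking that the stage ordering places every resolvent strictly above both of its premises so that Definition~\ref{def:dag:proof} is met. The essential insight is that the locality restriction $Atoms(T)\subseteq C\downarrow_{\R}$ together with the groundness of $C$ confines the whole proof search to the finite set $\mathcal{G}$, turning an a priori unbounded proof search into a terminating finite saturation.
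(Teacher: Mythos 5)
Your proposal is correct and follows essentially the same route as the paper: both reduce the question, via Lemma~\ref{lem:reach-is-finite}, to the finiteness of $C\downarrow_\R$ and hence of the set of candidate ground leaves, after which the existence of a local dag proof becomes a decidable finite ground problem. The only difference is one of explicitness: the paper simply invokes the decidability of unsatisfiability for a finite set of ground clauses, whereas you implement that test as a terminating resolution fixpoint inside $\mathcal{G}$ and verify both directions of the equivalence, which fills in the completeness argument the paper leaves implicit.
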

\begin{proof}
$\R$ is finite, and $C$ is ground, this implies that  $C\downarrow_\R$ is finite and ground
(Lemma \ref{lem:reach-is-finite}).
For each $C\downarrow_{\R}$ local dag proof of $S\cup\neg C$, leaves are in a finite set of ground clauses,
and the set of these leaves is unsatisfiable.
The problem consisting is determining 
whether a finite set of ground clauses is unsatisfiable is decidable, and hence 
we can decide whether there exists a $C\downarrow_\R$ 
local dag proof of $S\cup\neg C$. 
\end{proof}

% \begin{definition}
% Given a set $S$ of clauses  and a  clause $C$.
% The notation    $\sigma_{S,C}$ means that  $\sigma$ is a substitution  
% grounding of ~$C$   for ~$S$, that is $\sigma$ is a substitution satisfying the following: 
%  $(1)$ the support of $\sigma$ is the set of variables occurring in $C$; and 
% ~$(2)$~  $\sigma$ is injective and maps each variable $x$ to a constant
%   $c_x$ that does not occur in ~$S$~ or ~$C$.
% \end{definition}

We define a notion of \textit{redundancy}  that identifies clauses and inferences that are not needed for performing 
the saturation procedure.

\begin{definition}{(Redundancy)\label{def:redundancy}}
  Let \R be a finite rewriting system on atoms,
%   \begin{itemize}
%   \item 
a  ground clause $C$ is called \textit{\R-redundant} in a set  $S$ of
    clauses  if  $S\vdash_{C\downarrow_\R} C$, 
 a non-ground clause $C$ is called  \textit{\R-redundant} in a set $S$ of
    clauses  if all its ground  instances are \R-redundant in $S$,
and an  inference $C',C"\ded C$ by ordered resolution 
    is called \textit{\R-redundant} in the set $S$  of clauses  if $(1)$ one of the premises ($C'$ and $C"$) is 
    \R-redundant in $S$, or else if $(2)$ $S\vdash_{C\downarrow_{\R}} C$.
%   \end{itemize}
\end{definition}
Note that under this definition of redundancy,
if a clause $C$ in $S$ is subsumed by a clause 
$C'$ in $S$ then $C$ is \R-redundant in $S$.
\\
% 
% We remark that if an inference by ordered resolution 
%  is redundant with respect to the notion  of redundancy given in \cite{Basin-Ganzinger-01}
% then it is also redundant with
% respect to our notion of redundancy. 
Using this  notion of redundancy, we show next how to  relate \textit{a priori}
and \textit{a posteriori} ordered resolution  rules.

\begin{lemma}{\label{lem:redundancy}}
  Let $C_1=\Gamma_1 \rew \Delta_1, A_1$ and  
$C_2=A_2,\Gamma_2\rew \Delta_2$ be two clauses,
 $C_1, C_2 \ded C$  be  an inference by
 \textit{priori} ordered resolution with $A_1\sigma$ the resolved atom, and 
 $\R= \R(C_1\sigma)\cup
  \R(C_2\sigma)$.  Then  either this inference is \R-redundant in $\{C_1,C_2\}$ or is an
  inference by \textit{posteriori} ordered resolution.
\end{lemma}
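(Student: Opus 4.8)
The plan is to argue by a dichotomy driven by the two \emph{posteriori} maximality conditions. Writing $\sigma=\mgu{A_1,A_2}$ and $A=A_1\sigma=A_2\sigma$ for the resolved atom, and $C=(\Gamma_1,\Gamma_2\rew\Delta_1,\Delta_2)\sigma$ for the conclusion, an inference by \emph{posteriori} ordered resolution requires that $A$ be strictly maximal with respect to $\Gamma_1\sigma,\Delta_1\sigma$ and maximal with respect to $\Gamma_2\sigma,\Delta_2\sigma$ for $\atomordering$. If both of these hold, the given \emph{priori} inference is already a \emph{posteriori} inference and we are done. Otherwise I would show that the conclusion $C$ is \R-redundant in \setS, so that the inference is \R-redundant by clause~$(2)$ of Definition~\ref{def:redundancy}.

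First I would extract from the failing condition a \emph{surviving} atom. If strict maximality in the first premise fails, there is an atom $B\in\Gamma_1\cup\Delta_1$ with $B\sigma\succeq_a A$; if maximality in the second premise fails, there is an atom $B\in\Gamma_2\cup\Delta_2$ with $B\sigma\succ_a A$, hence again $B\sigma\succeq_a A$. In both cases $B$ is \emph{not} the resolved atom, so $B\sigma$ is one of the atoms of $C$, and both $B\sigma$ and $A$ are atoms of the same premise instance ($C_1\sigma$ in the first case, $C_2\sigma$ in the second). By Definition~\ref{def:rewriting:clause} this pair therefore contributes a rewrite rule $B\sigma\rew A$ to $\R=\R(C_1\sigma)\cup\R(C_2\sigma)$, and by Lemma~\ref{PropertiesOfR} (equivalently, by the standing hypothesis on $\atomordering$) this rule satisfies $\Var{A}\subseteq\Var{B\sigma}$.

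Next I would build a $C\downarrow_\R$-local dag proof for an arbitrary ground instance $C\tau$. Applying the rule $B\sigma\rew A$ with the substitution $\tau$ rewrites the surviving ground atom $B\sigma\tau$, which is an atom of $C\tau$, to $A\tau$; the variable inclusion guarantees that $A\tau$ is ground and is exactly the resolved ground atom. Hence $A\tau\in B\sigma\tau\downarrow_\R\subseteq C\tau\downarrow_\R$. The desired dag proof then consists of the leaves $C_1\sigma\tau$ and $C_2\sigma\tau$ (ground instances of clauses in \setS) together with the units of $\neg C\tau$, a single ground resolution step on $A\tau$ producing $C\tau$, and the usual resolution of $C\tau$ against $\neg C\tau$ down to the empty clause. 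Every atom occurring in this proof is either an atom of $C\tau$ or the resolved atom $A\tau$, all of which lie in $C\tau\downarrow_\R$; hence $\setS\vdash_{C\tau\downarrow_\R}C\tau$. As $\tau$ is arbitrary, $C$ is \R-redundant in \setS.

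I expect the crux of the argument to be the step $A\tau\in C\tau\downarrow_\R$: the resolved atom is precisely the atom that disappears from the conclusion, so locality can only be recovered by showing that it remains reachable from a surviving atom. This hinges both on reading the correct rewrite rule off the failing maximality condition and on the variable-inclusion property, which ensures simultaneously that the rewrite step $B\sigma\tau\rew A\tau$ is well defined and that it produces a ground atom. The degenerate possibility $B\sigma=A$ (two occurrences of the resolved atom in one premise) is harmless, since then the resolved atom already survives into $C\tau$ and so lies trivially in $C\tau\downarrow_\R$.
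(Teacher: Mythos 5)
Your proposal is correct and follows essentially the same route as the paper's proof: the failing \emph{a posteriori} maximality condition yields an atom $B\sigma$ of the conclusion with $B\sigma\succeq_a A_1\sigma$, hence a rule $B\sigma\rew A_1\sigma$ in $\R(C_j\sigma)\subseteq\R$, so the resolved atom lies in $C\downarrow_\R$ and a local dag proof exists. You merely spell out the grounding substitution, the variable-inclusion argument, and the explicit dag proof in more detail than the paper does.
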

\begin{proof} We have 
 $C_1=\Gamma_1 \rew \Delta_1, A_1$, 
$C_2=A_2,\Gamma_2\rew \Delta_2$, and $C_1, C_2 \ded C$ with 
$C=\Gamma_1\sigma,\Gamma_2\sigma \rew \Delta_1\sigma,\Delta_2\sigma$  be  an inference by
\textit{priori} ordered resolution.  
We assume that $C_1,C_2\ded C$  is not an inference
 by \textit{posteriori} ordered resolution. Then either 
  $A_1\sigma$ is not strictly maximal for
  $\atomordering$ in the set of atoms of $C_1\sigma$,
or $A_1\sigma$ is not maximal for $\atomordering$ in the set of atoms of $C_2\sigma$.
This implies that there is an atom $B$
in $C$ with 
$A_1\sigma \preceq_a B$.
Let $j$ be such that $B\in Atoms(C_j\sigma)$.
$C_j\sigma$ contains $A_1\sigma$ and $B$ with $A_1\sigma\preceq_a B$.
This implies that $B\to A_1\sigma\in\R(C_j\sigma)$,
and hence $A_1\sigma\sigma_{S,C}\in Atoms(C\sigma_{S,C})\downarrow_{\R}$
with $S=\{C_1,C_2\}$.
We then have  ${C_1,C_2}\vdash_{C\sigma_{S,C}\downarrow_{\R}} C$, and hence the inference  $C_1, C_2 \ded C$ 
is $\R$-redundant in $\{C_1,C_2\}$.
\end{proof}

\section{Saturation}
\label{subsec:sat:saturation}
\begin{definition}{(Saturated set of clauses)\label{def:saturated}}
  Let \R be a  rewriting system on atoms. We say that a set $S$ of clauses 
  is \textit{\R-saturated} up to redundancy by ordered resolution
  if $(1)$ any inference by priori ordered resolution from
  premises in $S$ is \R-redundant in $S$, $(2)$
  $\R(S)\subseteq\R$, and $(3)$
  for each \textit{priori} ordered resolution inference
  $C_1, C_2 \ded C$ with  $C_1,C_2\in S$,
  if the resolved atom $A\sigma$ is not
  strictly maximal in $C_1\sigma$ or 
not maximal in $C_2\sigma$
then 
  $\R(\{C_1\sigma,C_2\sigma\})\subseteq\R$.
\end{definition}

We  present now a procedure that,  providing it terminates, 
constructs from  a finite set $S$ of
clauses a pair  $(S',\R)$ such that 
 $S'$ is a 
 finite set
of clauses,   $\R$ is a  rewriting system on atoms, and 
 for every ground clause $C$, 
we have 
$S \models C$ iff
 $S'\vdash_{C\downarrow_\R}C$.

\begin{figure}
  \centering
    %\fbox{\vspace*{8em} ~~
\parbox{0.92\linewidth}{ 
        \begin{description}
        \item[Input:]~\\
          A finite set $S$ of clauses.
        \item[Initialization:]~\\
          Let $(S_1,\R_1)=(S,\R(S))$, and $i=1$.
\item[Transformation step:]~\\
We construct the pair  $(S_{i+1},\R_{i+1})$ from
the pair  $(S_{i},\R_{i})$ as follows:
Let $C_1,C_2\ded C$ be 
an inference by  ordered resolution with 
$C_1,C_2\in S_i$, and 
  $A\sigma$ the resolved atom; 
One of the following three  cases will be applied: 
  \begin{itemize}
  \item \textit{Non-maximality:}  If $A\sigma$ is not  strictly maximal for
    $\atomordering$ in the atoms of $C_1\sigma$ or not maximal for $\atomordering$ 
in the atoms of $C_2\sigma$
    then $S_{i+1}=S_i$, $\R_{i+1}=\R_i\cup\R(\set{C_1\sigma,C_2\sigma})$, and $i=i+1$;
  \item \textit{Redundancy:}  Otherwise, if
    $S_{i}\vdash_{C\downarrow_{\R_i}} C$
    then $S_{i+1}=S_i$, $\R_{i+1}=\R_i$, and $i=i+1$;
  \item \textit{Discovery:} Otherwise a new clause useful for establishing
    local proofs has been discovered, and hence 
    $S_{i+1}=S_{i}\cup\set{C}$,
  $\R_{i+1}=\R_i\cup\R(C)$, and $i=i+1$.
\end{itemize}
\item[Iteration:]~\\
  We repeat the \textbf{Transformation step} until a fixed point is
  reached.
\end{description}
Returns $(S_i,\R_i)$.  }
%~~}
\caption{Saturation procedure }
\end{figure}
\begin{definition}
The saturation procedure is  called \emph{fair}  when every
possible inference by \textit{priori} ordered resolution has been  performed. \end{definition}
From now on, we only consider fair saturation procedure and we may omit the 
 word "fair" for simplicity.
\\
We prove next that the  saturation procedure  actually constructs a saturated
set of clauses.

\begin{proposition}{\label{prop:saturation:correct}}
  Let $S$ be a finite set of clauses and $(S',\R)$ 
  be the output of the saturation procedure. 
 $S'$ is \R{}-saturated.
\end{proposition}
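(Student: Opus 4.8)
The plan is to exploit the fact that the procedure is monotone in both components: by inspection of the \textbf{Transformation step} we have $S_i\subseteq S_{i+1}$ and $\R_i\subseteq\R_{i+1}$ at every stage, so at the fixed point $S'=\bigcup_i S_i$ and $\R=\bigcup_i\R_i$. Before checking the three conditions of Definition \ref{def:saturated}, I would record two monotonicity facts about the relation $\vdash$ and hence about redundancy. First, redundancy is monotone in $\R$: if $\R\subseteq\R'$ then $C\downarrow_\R\subseteq C\downarrow_{\R'}$, so any $C\downarrow_\R$-local dag proof (whose atoms lie in $C\downarrow_\R$) is a fortiori a $C\downarrow_{\R'}$-local dag proof, whence $S\vdash_{C\downarrow_\R}C$ implies $S\vdash_{C\downarrow_{\R'}}C$. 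Second, redundancy is monotone in $S$: the leaves of a dag proof are ground instances of $S\cup\neg C$, and enlarging $S$ only enlarges the pool of admissible leaves, so a dag proof over $S$ is one over any $S'\supseteq S$. Together these show that whatever is $\R_i$-redundant in $S_i$ is $\R$-redundant in $S'$; the same passes to non-ground clauses, which are redundant when all their ground instances are.

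For condition (2) I would observe that $\R(S')=\bigcup_{C\in S'}\R(\set{C})$, since by Definition \ref{def:rewriting:clause} every rule of $\R(S')$ is built from two atoms of a single clause. Each clause of $S'$ enters $S'$ either at initialization, where $\R(\set{C})\subseteq\R(S)=\R_1$, or through the \textbf{Discovery} branch, where the update $\R_{i+1}=\R_i\cup\R(\set{C})$ puts $\R(\set{C})$ into $\R$; hence $\R(S')\subseteq\R$. For condition (3) I would use fairness: any priori ordered resolution inference $C_1,C_2\ded C$ with $C_1,C_2\in S'$ is performed at some stage $i$ with $C_1,C_2\in S_i$, and if its resolved atom $A\sigma$ fails strict maximality in $C_1\sigma$ or maximality in $C_2\sigma$, the \textbf{Non-maximality} branch fires and sets $\R_{i+1}=\R_i\cup\R(\set{C_1\sigma,C_2\sigma})$, so $\R(\set{C_1\sigma,C_2\sigma})\subseteq\R$.

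The core is condition (1). Again by fairness, an arbitrary priori ordered resolution inference $C_1,C_2\ded C$ from premises in $S'$ is processed at some stage $i$, falling into exactly one branch. In the \textbf{Non-maximality} branch $A\sigma$ is not strictly maximal in $C_1\sigma$ or not maximal in $C_2\sigma$, so by Lemma \ref{lem:redundancy}, whose rewriting system $\R(\set{C_1\sigma})\cup\R(\set{C_2\sigma})$ is exactly what the branch adds, the inference is $\R(\set{C_1\sigma,C_2\sigma})$-redundant in $\set{C_1,C_2}$; the two monotonicity facts lift this to $\R$-redundancy in $S'$. In the \textbf{Redundancy} branch the test $S_i\vdash_{C\downarrow_{\R_i}}C$ succeeds, which is precisely clause $(2)$ of the inference-redundancy definition relative to $(S_i,\R_i)$, and monotonicity again yields $\R$-redundancy in $S'$. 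In the \textbf{Discovery} branch $C$ is added, so $C\in S'$; then the inference is redundant because its conclusion is itself a clause of $S'$, and for any ground instance the clause together with the units of $\neg C$ derives the empty clause using only atoms already lying in $C\downarrow_\R$.

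I expect the main obstacle to be the \textbf{Discovery} branch together with the friction between non-ground conclusions and the ground notion of local dag proof: one must argue carefully that once $C$ lies in $S'$ the inference labelled by $C$ is $\R$-redundant, which I would do by descending to ground instances through the grounding substitution already employed in the proof of Lemma \ref{lem:redundancy}. A secondary point needing care is reconciling fairness with fixed-point termination: I would argue that at the fixed point every inference among premises of $S'$ has in fact been examined at some earlier stage, so that the per-stage case analysis above is exhaustive and the conclusions stabilize to the returned pair $(S',\R)$.
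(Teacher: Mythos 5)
Your proposal is correct and follows essentially the same route as the paper: verify the three conditions of Definition \ref{def:saturated} by appealing to the initialization/discovery updates for $\R(S')\subseteq\R$, the non-maximality update for condition (3), and fairness plus a case analysis over the three branches (with Lemma \ref{lem:redundancy} handling the non-maximality case) for condition (1). Your explicit treatment of the monotonicity of $\vdash$ in both $S$ and $\R$, and of the ground-instance issue in the discovery branch, merely spells out steps the paper leaves implicit ("by construction", "it is trivial"), so this is the same proof presented more carefully.
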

\begin{proof}
Let $S$ be a finite set of clauses such that the saturation 
procedure terminates and outputs $(S',\R)$.
By the initialization and discovery cases of the saturation, we have  
$\R(S')\subseteq \R$,  and by the non-maximality case of the saturation procedure 
 we have  $\R(\set{C_1\sigma,C_2\sigma})\subseteq \R$ for each $C_1,C_2\in S'$ 
 on which \textit{priori} ordered resolution is possible
but not  \textit{posteriori} ordered resolution.
Now, we prove that any inference by ordered resolution from premises in $S'$ is \R-redundant in $S'$.
Let $C_1,C_2\ded C$ be an inference by ordered resolution with $C_1,C_2\in S'$.
 Since the saturation is fair, this inference has been considered during the computation of $(S',\R)$, and falls into one 
 of the following cases: 
the \textit{redundancy}, the {non-maximality},
the \textit{discovery}. 
By contradiction, assume  that    
 $C_1,C_2\ded C$ is not
  \R-redundant in $S'$, then  we fall in one of the two other cases: 
  \begin{description}
  \item[non-maximality:]   the resolved atom $A\sigma$ is not strictly maximal
    in the atoms of $C$. Therefore  
    $C_1,C_2\ded C$ is not
    an inference by \textit{posteriori} ordered resolution, 
 and  by construction 
$\R(C_1\sigma) \cup \R (C_2\sigma)     \subseteq \R$.
Furthermore, Lemma \ref{lem:redundancy} implies that  the inference is 
$\R(C_1\sigma)\cup\R(C_2\sigma)$-redundant,
and hence  it is \R-redundant, which contradicts our assumption of non-redundancy.   
  \item[discovery:] this case implies that $C\in S'$, and  then it is trivial that 
   the inference is
    \R-redundant in $S'$, which contradicts our assumption of non-redundancy.   
  \end{description}
  As a consequence every inference between two clauses of $S'$
  must be $\R$-redundant. We finally conclude that  $S'$ is \R-saturated.
\end{proof}

\section{Decidability of the ground entailment problem}
\label{sec:decidability}
We consider in this section a finite set $S$ of clauses, and a finite rewriting system \R on atoms  
such that $S$ is \R{}-saturated. 
\begin{proposition}{\label{prop:locality}}
  Let $C$ be a ground
  clause. We have that $S\models C$ implies $S\vdash_{C\downarrow_\R} C$.
\end{proposition}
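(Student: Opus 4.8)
The plan is to argue the contrapositive: assuming $S\not\vdash_{C\downarrow_\R}C$, I will build a Herbrand model of $S$ that falsifies $C$, so that $S\not\models C$. First I would exploit finiteness: since $\R$ is finite and $C$ is ground, $C\downarrow_\R$ is a finite set of ground atoms by Lemma~\ref{lem:reach-is-finite}. Let $N$ be the set of all ground instances of clauses of $S\cup\neg C$ whose atoms all lie in $C\downarrow_\R$. A refutation of $N$ would be exactly a $C\downarrow_\R$-local dag proof of $S\cup\neg C$ (all its atoms are in $C\downarrow_\R$, and ordering its clauses by derivation depth makes leaves minimal), i.e.\ a witness for $S\vdash_{C\downarrow_\R}C$. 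As we assume this fails, $N$ is satisfiable; fix a Herbrand model $I_0$ of $N$. Because $\neg C\subseteq N$ (the atoms of the ground clause $C$ are trivially in $C\downarrow_\R$), we have $I_0\not\models C$.

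Next I would extend $I_0$ to a full Herbrand interpretation $I$ over \emph{all} ground atoms: $I$ agrees with $I_0$ on $C\downarrow_\R$, and on atoms outside $C\downarrow_\R$ it is defined by $\succ_a$-recursion via the standard productive-clause (candidate-model) construction. Since every atom of $C$ lies in $C\downarrow_\R$, this extension does not disturb the truth value of $C$, so still $I\not\models C$. It therefore remains to prove the key claim $I\models S$, and with $I\not\models C$ this gives $S\not\models C$, as required.

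The heart of the proof is this claim, which I would establish by contradiction. Suppose some ground instance $D\sigma$ of a clause $D\in S$ is false in $I$, and take one minimal with respect to the clause ordering $\clauseordering$. If every atom of $D\sigma$ were in $C\downarrow_\R$ then $D\sigma\in N$ and $I_0\models D\sigma$, a contradiction; hence the $\succ_a$-maximal atom $A$ of $D\sigma$ lies outside $C\downarrow_\R$. The productive-clause construction guarantees that a minimal false instance is \emph{reducible}: there is a \textit{priori} ordered resolution (or factoring) inference with resolved atom $A$ between ground instances of clauses of $S$, whose conclusion is a strictly $\clauseordering$-smaller instance that is again false in $I$. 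Lifting this to an inference between the corresponding clauses of $S$ and invoking $\R$-saturation — every such inference is $\R$-redundant by condition~$(1)$ of Definition~\ref{def:saturated}, while conditions~$(2)$,$(3)$ record in $\R$ precisely the co-occurrence orderings used by Lemma~\ref{lem:redundancy} — the conclusion admits a local dag proof from strictly smaller instances, so some smaller instance is already false in $I$, contradicting minimality. This forces $I\models S$.

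The main obstacle I anticipate is exactly this last claim, for two intertwined reasons. First, $\succ_a$ is only assumed well-founded and monotone, \emph{not} total, so the candidate-model construction must be carried out relative to the fixed boundary interpretation $I_0$ on $C\downarrow_\R$ and with respect to a total well-founded refinement of $\succ_a$ compatible with $\R$; one must check that the boundary can never itself force a minimal counterexample. Second, the reduction from ``$D\sigma$ minimal and false'' to ``an $\R$-redundant \textit{priori} inference with resolved atom $A\notin C\downarrow_\R$'' is where Lemma~\ref{lem:redundancy} and the condition $\R(\set{C_1\sigma,C_2\sigma})\subseteq\R$ must be combined: one shows the surviving atoms of the conclusion satisfy $A'\prec_\R B$ for atoms $B$ already available, so that the local proof spliced in for the redundant conclusion introduces no atom beyond the reachable set, keeping the construction consistent on the local part $C\downarrow_\R$ and strictly decreasing on the rest.
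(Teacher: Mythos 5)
Your route is genuinely different from the paper's: the paper does not build a candidate model at all, but instead takes a DAG refutation $\pi$ of $S\cup\neg C$ minimizing the set $\delta(\pi)=Atoms(\pi)\downarrow_\R\setminus Atoms(C)\downarrow_\R$ of non-local atoms, picks a $\prec_a$-maximal $A\in\delta(\pi)$, shows $A$ is $\prec_\R$-maximal among \emph{all} atoms of $\pi$, eliminates $A$ by resolving it away among the leaves containing it, and uses saturation to replace the resulting conclusions by local subproofs, strictly decreasing $\delta$. Unfortunately, your model-construction variant has a genuine gap exactly where you yourself flag discomfort. You argue: ``if every atom of $D\sigma$ were in $C\downarrow_\R$ then $D\sigma\in N$, contradiction; \emph{hence the maximal atom $A$ of $D\sigma$ lies outside $C\downarrow_\R$}.'' This is a non sequitur: from the fact that some atom of $D\sigma$ escapes $C\downarrow_\R$ you cannot conclude that the maximal one does. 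The set $C\downarrow_\R$ is downward-closed only under the rewrite relation $\to_\R$, \emph{not} under $\prec_a$ (nor under any total well-founded refinement of it): one can perfectly well have $B\prec_a A$ with $A\in C\downarrow_\R$ and $B\notin C\downarrow_\R$, since $B\prec_a A$ does not imply $B\in A\downarrow_\R$. So the minimal false instance may straddle the boundary with its maximal atom inside $C\downarrow_\R$; its truth value is then frozen by $I_0$, the clause is not in $N$, and the productive-clause machinery has nothing to act on. The whole reduction step collapses in precisely this case.

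The obvious repair --- drive the construction by a total well-founded order in which all of $C\downarrow_\R$ is an initial segment (which is consistent with $\prec_\R$, though not with $\prec_a$) --- reintroduces the dual problem: the maximal atom of the minimal counterexample is then maximal for that auxiliary order, and you must still show that the corresponding atom in the source clause of $S$ is maximal for $\atomordering$, since $\R$-saturation only speaks about \textit{priori} $\atomordering$-ordered inferences. That bridge is exactly what the paper's Claim~2 supplies, using $\R(S)\subseteq\R$: if the source atom were not $\atomordering$-maximal in its clause, a rule of $\R(S)$ would rewrite a larger atom of that clause onto it, contradicting $\prec_\R$-maximality of its instance. Your proposal gestures at conditions $(2)$ and $(3)$ of Definition~\ref{def:saturated} but never states this transfer between $\prec_\R$-maximality and $\atomordering$-maximality, and without it neither version of your minimal-counterexample step goes through. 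The distinction between the two orderings is the entire point of the paper's construction, so this is not a detail to be checked but the missing core of the argument.
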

\begin{proof}
Let $\R$ be a finite rewrite system on atoms,   $S$  
be a finite set of clauses which is $\R$-saturated, and $C$ be a ground clause such  that
$S\models C$. Let $\Pi$ be a set of DAG proofs 
of $S\cup \neg C$. Since the resolution is complete and correct, we have  $\Pi\not=\emptyset$.
For every $\pi\in\Pi$, let   $\delta(\pi)=Atoms(\pi)\downarrow_{\R}\setminus Atoms(C)\downarrow_{\R}$
be the distance from $\pi$ to a local dag proof (if $\delta(\pi)=\emptyset$ then $\pi$
is a local dag proof).

Let $\pi\in\Pi$ be such that $\delta(\pi)$ is minimal, and let us prove that  $\delta(\pi)=\emptyset$.
By contradiction,  assume that $\delta(\pi)\not=\emptyset$
and let $A$ be a maximal atom in $\delta(\pi)$ for the ordering  $\prec_a$.
By Lemma \ref{PropertiesOfR}, we have that $B\to_{\R} B'$ implies that $B\succeq_a B'$ and hence 
$A$ is an atom of $\pi$.
We prove in the next \textit{claim} that $A$ must be  maximal with respect to the 
atoms of $\pi$ for the ordering $\prec_{\R}$.

\paragraph{\textbf{Claim 1.}}
   The atom $A$ is maximal in $Atoms(\pi)$ for
    the ordering $\prec_\R$.
  \begin{proofclaim}
    By contradiction if this were not the case there would exist an atom $B\in Atoms(\pi)$
    with $B\not= A$, $A\prec_\R B$, and hence $A\prec_a B$ (Lemma \ref{PropertiesOfR}). Since $A$ is maximal in
    $\delta(\pi)$ for the ordering $\prec_a$, we would have that $B$ is not in 
$Atoms(\pi)\downarrow_{\R}\setminus Atoms(C)\downarrow_{\R}$, and thus 
      $B\in
    Atoms(C)\downarrow_\R$. Since $A\prec_a B$, we have  that $A\in
    Atoms(C)\downarrow_\R$, which  contradicts $A\in \delta(\pi)$.
  \end{proofclaim}

Let $\text{\it Leaves}_A^+$ be the set of leaves of $\pi$ that contain the atom $A$, and
$\text{\it Leaves}_A^-$ be the set of leaves   that do not contain $A$.  
  The correctness and completeness of the  resolution
  implies that the set of clauses $\text{\it Leaves}_A^+\cup \text{\it
    Leaves}_A^-$ is unsatisfiable.

  \paragraph{\textbf{Claim 2.}}
    Each clause $C_A\in \text{\it Leaves}_A^+$ is an instance with a
    substitution $\sigma$ of a clause $C^s_A\in S$ with every atom $A^s$
    satisfying $A^s\sigma=A$ is maximal for $\atomordering$.

  \begin{proofclaim}
    By definition of $\text{\it Leaves}_A^+$,   $C_A$ is either a ground  instance of a clause in $S$
    or a clause in $\neg C$. Since $A$ is not an atom occurring in
    $C$ the later case is excluded. Thus there exists a clause $C^s_A\in S$,
    an atom $A^s\in C^s_A$, and a substitution $\sigma$ such that
    $A^s\sigma=A$ and $C^s_A\sigma=C_A$. Finally if $A^s$ is not
    maximal for $\atomordering$ in $C^s_A$ and $\R(S)\subseteq \R$ then it is not maximal for
    $\prec_\R$ in $C^s_A$ and thus by monotonicity, $A$ is not  maximal for $\prec_\R$ in the
    atoms of $C_A$. This contradicts the fact that $A$ is maximal
    for $\prec_\R$ among the atoms occurring in $\pi$.
  \end{proofclaim}
Thus every resolution on $A$ between two 
clauses $C,C'$ in $\text{\it Leaves}_A^+$ is a ground  instance with 
substitution $\sigma$ of a \textit{priori} ordered resolution between two clauses 
$C^s,C'^s$ in $S$.
In $\pi$, $\text{\it Leaves}_A^+$ are the unique leaves containing $A$; Furthermore, 
$A$ is maximal in each clause of $\text{\it Leaves}_A^+$ for the ordering 
$\atomordering$ by \textit{Claim 2}. 
This implies that we can first eliminate all the occurrences of the atom $A$ 
by application of the priori ordered resolution on 
$\text{\it Leaves}_A^+$, and let $\text{\it Leaves}'$ be the obtained set of clauses after 
performing all possible resolutions on $A$ in $\text{\it Leaves}_A^+$. 
The unsatisfiability of  $\text{\it Leaves}_A^+\cup \text{\it Leaves}_A^-$ implies
that unsatisfiability of $\text{\it Leaves}'\cup \text{\it Leaves}_A^-$.
We prove next that we can construct a new DAG proof $\pi'$ of $S\cup \neg C$ 
with $\delta(\pi')\prec_a^{set}\delta(\pi)$.

Let $C=\Gamma\to \Delta,A$, $C'=A,\Gamma'\to \Delta'$ be two clauses in $\text{\it Leaves}_A^+$, 
and let $C''$ be the result of the resolution on $C$ and $C'$. By  definition 
 of $\text{\it Leaves}'$, we have $C''\in\text{\it Leaves}'$. Let 
$C^s$ and $C'^s$ be two clauses on $S$ such that:
$C=C^s\sigma$, $C'=C'^s\sigma$, 
$A^s\in Atoms(C^s)$ with $A^s$ maximal 
in $C^s$ for $\atomordering$ and $A=A^s\sigma$, 
$A'^s\in Atoms(C'^s)$ with $A'^s$ maximal 
in $C'^s$ for $\atomordering$ and $A=A'^s\sigma$.
Wlog, assume that ordered factorization has been applied to
$C^s$ and $C'^s$  so that there is one-to-one mapping
between atoms of $C^s$ (respectively $C'^s$) and 
atoms of $C$ (respectively atoms of $C'$).
The \textit{priori} ordered resolution can then be applied to $C^s$ and $C'^s$ 
with $\theta$ the most general unifier of $A^s$ and $A'^s$, and $C''^s$ is the obtained clause.
Since $S$ is $\R$-saturated, this inference is saturated and then one of the two following cases holds:
\begin{enumerate}
\item $C''^s\in S$: $C''$
is then a ground instance of a clause in $S$. In this case we let $S(C'')=\{C''\}$. 
We remark that $Atoms(C'') =Atoms(C \cup C')\setminus \{A\}\subseteq Atoms(\pi) \setminus \{A\}$.
\item $C''^s\notin S$: by the \textit{saturation procedure}, we have two cases:
\begin{enumerate}
\item Non-maximality:  the inference $C^s, C'^s \ded C''^s$ is not an inference by \textit{posteriori}
 ordered resolution, and hence by Lemma \ref{lem:redundancy},
 the inference is $\R(C^s\theta\cup C'^s\theta)$-redundant in $\{C^s,C'^s\}$,
  and then $\R$-redundant in $\{C^s,C'^s\}$.
By definition of the \textit{redundancy}, we then have 
%   $S\vdash_{C''^s\alpha_{S,C''^s}\downarrow_{\R}\cup
%     A^s\theta\alpha_{S,C''^s}\downarrow^-_{\R}} C''$,
% and hence, 
$S\vdash_{C''\downarrow_{\R}} C''$. We then let 
 $S(C'')$ be a set ground instances of clauses of $S$ whose
    atoms are in $C''\downarrow_\R$ that entails
    $C''$. 
We remark that $Atoms(S(C''))\subseteq C''\downarrow_\R
\subseteq (Atoms(\pi)\setminus A)\downarrow_\R$.
% The last inclusion is due to the maximality of $A$ in the atoms of $\pi$ for the ordering $\prec_{\R}$.
\item Redundancy: $C''^s$ is $\R$-redundant in $S$, and then by Definition  \ref{def:redundancy},
all ground instances of $C''^s$ are $\R$-redundant in $S$. This implies that 
 $S\vdash_{C''\downarrow_{\R}} C''$. We let 
  $S(C'')$ be a set ground instances of clauses of $S$ whose
    atoms are in $C''\downarrow_\R$ that entails
    $C''$.
We remark that $Atoms(S(C''))\subseteq C''\downarrow_\R
\subseteq Atoms(\pi)\setminus A\downarrow_\R$.
% The last inclusion is due to the maximality of $A$ in the atoms of $\pi$ for the ordering $\prec_{\R}$.
\end{enumerate}
\end{enumerate} 
The unsatisfiability of $\text{\it Leaves}_A^+ \cup \text{\it Leaves}_A^-$
implies the unsatisfiability of 
$\text{\it Leaves}_A^- \cup \bigcup_{C''\in\text{\it Leaves}'} S(C'')$,
 and hence there is a DAG proof $\pi'$ of 
$\text{\it Leaves}_A^- \cup \bigcup_{C''\in\text{\it Leaves}'} S(C'')$, which is also a DAG proof 
of $S\cup \neg C$.
 We prove next that $\delta(\pi')\prec_a^{set}\delta(\pi)$.

\begin{equation*}
 \begin{split}
\delta(\pi') &= Atoms(\pi')\downarrow_{\R}\setminus Atoms(C)\downarrow_{\R}\\
&= [Atoms(\text{\it Leaves}_A^-) \cup \bigcup_{C''\in\text{\it Leaves}'} Atoms(S(C''))]\downarrow_{\R}\setminus Atoms(C)\downarrow_{\R}\\
\\
&\subseteq (Atoms(\pi)\setminus A \cup Atoms(\text{\it Leaves}'))\downarrow_{\R}
\setminus Atoms(C)\downarrow_{\R}
\\
&\subseteq (Atoms(\pi)\downarrow_{\R}\setminus A) \setminus Atoms(C)\downarrow_{\R} (\text{maximality of }A\text{ in }Atoms(\pi))
\\
&= (Atoms(\pi)\downarrow_{\R}\setminus Atoms(C)\downarrow_{\R})\setminus A
\\
&=\delta(\pi)\setminus A.
 \end{split}
\end{equation*}
% 
% 
% 
% $\delta(\pi')= \atoms{\pi'}\downarrow_{\R}\setminus\atoms{C}\downarrow_{\R}
% = (\atoms{\text{\it Leaves}_A^-} \cup \bigcup_{C''\in\text{\it Leaves}'} \atoms{S(C'')})\downarrow_{\R}
% \setminus \atoms{C}\downarrow_{\R}
% \subseteq (\atoms{\pi}\downarrow_{\R}\setminus A \cup \atoms{\text{\it Leaves}'})\downarrow_{\R}
% \setminus \atoms{C}\downarrow_{\R}
% \subseteq \atoms{\pi}\downarrow_{\R}\setminus A \setminus \atoms{C}\downarrow_{\R}=
% \atoms{\pi}\downarrow_{\R}\setminus \atoms{C}\downarrow_{\R}\setminus A
% =\delta(\pi)\setminus A$.
Since $A\in\delta(\pi)$ and is maximal, we then have 
 $\delta(\pi')\prec_a^{set}\delta(\pi)$, and hence, there is a DAG proof $\pi'$
 of $S\cup \neg C$
with $\delta(\pi')$ strictly smaller than $\delta(\pi)$
and that contradicts the minimality of $\delta(\pi)$.
We conclude that $\delta(\pi)=\emptyset$, and hence we have 
$S\vdash_{C\downarrow_\R} C$.
\end{proof}

\begin{proposition}{\label{prop:pro3}}
  Let $C$ be a ground clause. We have that $S\vdash_{C\downarrow_\R} C$ implies $S\models C$.
\end{proposition}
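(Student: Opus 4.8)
The plan is to recognize Proposition~\ref{prop:pro3} as the soundness half of locality: it should follow immediately from the correctness of ordinary resolution together with the equivalence ``$S\models C$ iff $S\cup\neg C$ is unsatisfiable'' recalled in Section~\ref{sec:preliminaries}. First I would unfold the hypothesis. By Definition~\ref{def:local}, $S\vdash_{C\downarrow_\R} C$ means there is a $C\downarrow_\R$-local dag proof $(T,<_T)$ of $S\cup\neg C$; that is, an ordered finite set of ground clauses containing the empty clause, in which every clause is either a ground instance of a clause of $S\cup\neg C$ or the resolvent of two $<_T$-smaller clauses of $T$. Note that the atom-set constraint $Atoms(T)\subseteq C\downarrow_\R$ will play no role in the argument; it only shapes the proof and is irrelevant to its soundness.

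Next I would isolate the set $L\subseteq T$ of those clauses of $T$ that are ground instances of clauses in $S\cup\neg C$, and show by induction along the well-founded order $<_T$ that every clause of $T$ is a logical consequence of $L$. The base case is immediate for clauses in $L$; for a clause obtained as the resolvent of two $<_T$-smaller premises, I would invoke the soundness of the resolution rule (a resolvent is entailed by its two premises) together with the induction hypothesis. Since $T$ contains the empty clause, it follows that $L$ entails the empty clause, i.e.\ $L$ is unsatisfiable.

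It then remains to push this unsatisfiability back to $S\cup\neg C$, which I would do by contraposition using Herbrand semantics. Because $C$ is ground, the clauses of $\neg C$ are ground, and every clause of $L$ is a ground instance of a clause of $S\cup\neg C$. A universally quantified clause is true in a Herbrand interpretation exactly when all of its ground instances are; hence any Herbrand model of $S\cup\neg C$ would satisfy every clause of $L$, contradicting the unsatisfiability of $L$ just established. Therefore $S\cup\neg C$ admits no Herbrand model, i.e.\ it is unsatisfiable, and by the equivalence recalled above I conclude $S\models C$.

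I do not expect a genuine obstacle here: this is the ``easy'' direction of locality, where neither the rewriting system $\R$ nor the restriction on $Atoms(T)$ contributes anything, and the whole content is the soundness of resolution. The only step meriting a little care is the Herbrand-semantics transfer, where one must be explicit that satisfaction of a clause in a Herbrand model coincides with satisfaction of all its ground instances, so that a model of $S\cup\neg C$ is automatically a model of $L$.
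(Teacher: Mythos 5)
Your proposal is correct and follows essentially the same route as the paper: the paper's proof simply observes that a local dag proof is in particular a dag proof of $S\cup\neg C$, and invokes the correctness of resolution to conclude that $S\cup\neg C$ is unsatisfiable, hence $S\models C$. You merely spell out in more detail (induction along $<_T$, the Herbrand-semantics transfer for ground instances) what the paper compresses into the phrase ``by correctness of the resolution.''
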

\begin{proof}
Let $C$ be a ground clause such that $S\vdash_{C\downarrow_\R} C$. 
This implies that there is a DAG proof of $S\cup \neg C$, and hence by correctness of the resolution,
$S\cup \neg C$ is unsatisfiable, and hence $S\models C$.
\end{proof}

\begin{proposition}{\label{prop:decidable:saturated}}
  Let $\R$ be a finite rewriting system on atoms, and $S$ be an 
 \R{}-saturated set of clauses. The ground entailment
  problem for $S$ is decidable.
\end{proposition}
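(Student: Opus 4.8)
The plan is to combine the two preceding propositions with the decidability of local dag proofs. Proposition~\ref{prop:locality} establishes that $S\models C$ implies $S\vdash_{C\downarrow_\R} C$, and Proposition~\ref{prop:pro3} establishes the converse $S\vdash_{C\downarrow_\R} C$ implies $S\models C$. Taken together, these give the equivalence $S\models C \iff S\vdash_{C\downarrow_\R} C$ for every ground clause $C$. The entire content of the decidability claim therefore reduces to observing that the right-hand side of this equivalence is effectively decidable.

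First I would invoke the hypotheses: $\R$ is a finite rewriting system on atoms and $S$ is a finite set of clauses (the $\R$-saturated set produced by the saturation procedure). Given an arbitrary ground input clause $C$, the relation $S\vdash_{C\downarrow_\R} C$ asks precisely whether a $C\downarrow_\R$-local dag proof of $S\cup\neg C$ exists. This is exactly the problem shown decidable in Lemma~\ref{lem:decidable:local:entailment}, whose hypotheses — finiteness of $S$, groundness of $C$, and finiteness of $\R$ — are all met here. Hence, on input $C$, we can decide whether $S\vdash_{C\downarrow_\R} C$, and by the equivalence above this answers whether $S\models C$, which is exactly the ground entailment problem for $S$.

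I do not expect a genuine obstacle in this proof, since the hard analytic work has already been discharged in Proposition~\ref{prop:locality} (the locality argument transforming an arbitrary dag proof into a local one by minimizing the distance $\delta$). The only point requiring a little care is to check that the finiteness needed by Lemma~\ref{lem:decidable:local:entailment} is genuinely available: $C\downarrow_\R$ is finite because $C$ is ground and $\R$ is finite (Lemma~\ref{lem:reach-is-finite}), so the candidate atoms, and therefore the candidate leaves and the candidate dag proofs, range over a finite ground set; the remaining check, namely whether some finite set of ground clauses drawn from this set is unsatisfiable, is the classical decidable propositional (ground) unsatisfiability problem. Assembling these facts yields a decision procedure: enumerate the finite set $C\downarrow_\R$, enumerate the finitely many ground instances of clauses in $S\cup\neg C$ with atoms in $C\downarrow_\R$, and test the resulting finite ground clause set for unsatisfiability; output \textbf{entailed} exactly when a $C\downarrow_\R$-local dag proof is found. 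This establishes the decidability of the ground entailment problem for $S$.
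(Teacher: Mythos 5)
Your proof is correct and follows exactly the same route as the paper's: combine Propositions~\ref{prop:locality} and~\ref{prop:pro3} to obtain the equivalence $S\models C \iff S\vdash_{C\downarrow_\R} C$, then appeal to Lemma~\ref{lem:decidable:local:entailment} for the decidability of the right-hand side. Your additional unpacking of why the lemma's finiteness hypotheses hold (via Lemma~\ref{lem:reach-is-finite} and ground unsatisfiability) is a welcome elaboration but does not change the argument.
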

\begin{proof}
Let $C$ be an arbitrary ground clause,  the \textit{ground  entailment problem} for $S$ is decidable
if and only if  $S\models C$ is decidable.
By the propositions \ref{prop:locality} and \ref{prop:pro3}, we have that 
$S\models C$ if and only if $S\vdash_{C\downarrow_\R} C$.
By Lemma \ref{lem:decidable:local:entailment}, $S\vdash_{C\downarrow_\R} C$ is decidable. We conclude that 
$S\models C$ is decidable, and hence the \textit{ground entailment problem}
is decidable.
\end{proof}

From the previous lemmas and propositions, we obviously deduce the following theorem which is 
 the main result of the  paper.

\begin{theorem}{\label{theo:main}}
  Let $\atomordering$ be a  well-founded,  monotone atom ordering such
  that  $A\prec_a B$
  implies $\Var{A}\subseteq\Var{B}$ for every atoms $A$ and $B$. Let $S$ be a set of clauses such that the 
  saturation on $S$ terminates using the atom ordering
  $\atomordering$. Then the ground entailment problem for $S$ is decidable.
\end{theorem}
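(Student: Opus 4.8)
The plan is to assemble the machinery built up in the preceding sections rather than to argue from scratch. First I would run the saturation procedure of Section~\ref{subsec:sat:saturation} on the (finite) input set $S$. By the hypothesis of the theorem this computation terminates, so it returns a pair $(S',\R)$. Because $S$ is finite and each transformation step adds only finitely many clauses (in the \textit{discovery} case) and finitely many rewrite rules (via $\R(C)$ or $\R(\set{C_1\sigma,C_2\sigma})$), and because termination means only finitely many steps are performed, both $S'$ and $\R$ are finite. Proposition~\ref{prop:saturation:correct} then tells us that $S'$ is \R-saturated.

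Next I would invoke Proposition~\ref{prop:decidable:saturated}: since $\R$ is a finite rewriting system on atoms and $S'$ is \R-saturated, the ground entailment problem \emph{for $S'$} is decidable. Concretely, this decision routine works by checking, for an input ground clause $C$, whether $S'\vdash_{C\downarrow_\R} C$, which is effective by Lemma~\ref{lem:decidable:local:entailment} and correct by Propositions~\ref{prop:locality} and~\ref{prop:pro3}.

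The one point that is not already an off-the-shelf proposition, and hence the step I expect to require genuine (if short) argument, is the passage from the ground entailment problem for $S'$ to that for the original set $S$. Here I would show that $S$ and $S'$ have exactly the same Herbrand models, so that $S\models C$ holds if and only if $S'\models C$ for every ground clause $C$. The inclusion $S\subseteq S'$ is maintained throughout the procedure, which gives $S'\models D$ for every $D\in S$ trivially; conversely, every clause added during saturation is the conclusion of an ordered resolution inference from clauses already present, hence a logical consequence of them, and by induction on the step index a consequence of $S$. Soundness of resolution thus yields $S\models D$ for every $D\in S'$, and the two sets are logically equivalent. Composing this equivalence with the decidability of the ground entailment problem for $S'$ established above yields decidability of the ground entailment problem for $S$, which completes the proof.
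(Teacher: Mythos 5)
Your proposal is correct and follows essentially the same route as the paper, which simply chains Proposition~\ref{prop:saturation:correct}, Propositions~\ref{prop:locality} and~\ref{prop:pro3}, Lemma~\ref{lem:decidable:local:entailment}, and Proposition~\ref{prop:decidable:saturated} after running the (terminating) saturation. The only step the paper leaves implicit is the logical equivalence of $S$ and the saturated set $S'$ via soundness of resolution, which you rightly identify and spell out.
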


\section{Comparison with existing works}\label{sec:comp}

This paper is meant to be an extension of~\cite{Basin-Ganzinger-01} to
more general orderings and it relies on \emph{a priori} instead of
\emph{a posteriori} ordered resolution used in \cite{Basin-Ganzinger-01}. 
Though various settings are
considered  in ~\cite{Basin-Ganzinger-01}, a common trait is that the
atom ordering $\prec_a$ and the term ordering $\prec_t$ satisfy the following:
\begin{itemize}
\item the term ordering $\prec_t$ is well-founded and  total on ground terms; 
\item the atom ordering $\prec_a$  is
  compatible with the term ordering  $\prec_t$,
  \emph{i.e.}
$A(s_1,\ldots,s_m) \prec_a
  B(t_1,\ldots,t_n)$
whenever for  any $1\le
  j\le m$
there exists $1\le i\le n$ such that 
 $s_j\prec_t t_i$;
\item the atom ordering  $\prec_a$  is monotone;
\item every term $t$ has only a finite number of smaller terms for $\prec_t$.
\end{itemize}
We prove below that such orderings also satisfy our criteria when the
underlying term ordering is subterm (\textit{i.e.} $u[t]\succ t$ for every terms $u$ and $t$), which is the case for term
orderings such as KBO, LPO, RPO, etc.

\begin{proposition}
  If there exists an infinite number of terms and if the term ordering
  $\prec_t$    is \emph{subterm} then under the above conditions $A\prec_a B$ implies
  $\Var{A}\subseteq\Var{B}$.
\end{proposition}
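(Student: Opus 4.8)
The plan is to argue by contradiction, using monotonicity to turn $A\prec_a B$ into a ground comparison and the compatibility condition to produce the reverse comparison. Suppose $A=P(s_1,\ldots,s_m)$ and $B=Q(t_1,\ldots,t_n)$ satisfy $A\prec_a B$, and suppose towards a contradiction that there is a variable $x\in\Var{A}\setminus\Var{B}$. Then $x$ occurs in some argument $s_{j_0}$ of $A$, while $x$ occurs in none of the $t_i$. The idea is to instantiate $x$ by a very large ground term: this blows up the argument $s_{j_0}$ of $A$ without touching $B$, so that $A\sigma$ becomes term-wise larger than $B\sigma$ and hence, by compatibility, $B\sigma\prec_a A\sigma$; combined with $A\sigma\prec_a B\sigma$ coming from monotonicity, this violates irreflexivity.

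Concretely, first I would fix a grounding substitution $\tau$ sending every variable occurring in $A$ or $B$ other than $x$ to some ground term, so that each $t_i\tau$ is a ground term (this uses $x\notin\Var{B}$). Let $T$ be the $\prec_t$-greatest of the finitely many ground terms $t_1\tau,\ldots,t_n\tau$, which exists since $\prec_t$ is total on ground terms. Next I would choose a ground term $u\succ_t T$ and set $\sigma=\tau\cup\set{x\mapsto u}$, a grounding of both $A$ and $B$. Since $x$ occurs in $s_{j_0}$, the term $u=x\sigma$ is a subterm of $s_{j_0}\sigma$, so the subterm property gives $s_{j_0}\sigma\succeq_t u\succ_t T\succeq_t t_i\sigma$ for every $i$ (note $t_i\sigma=t_i\tau$). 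Thus every argument $t_i\sigma$ of $B\sigma$ satisfies $t_i\sigma\prec_t s_{j_0}\sigma$, and the compatibility condition yields $B\sigma\prec_a A\sigma$. On the other hand, monotonicity applied to $A\prec_a B$ gives $A\sigma\prec_a B\sigma$, so by transitivity $A\sigma\prec_a A\sigma$, contradicting irreflexivity of $\prec_a$. Hence no such $x$ exists and $\Var{A}\subseteq\Var{B}$. The degenerate case $n=0$ is even simpler, since the compatibility condition for $B\sigma\prec_a A\sigma$ then holds vacuously and no large $u$ is needed.

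The step I expect to be the crux is the choice of the ground term $u\succ_t T$, which is exactly where the hypothesis of an infinite supply of terms is indispensable. Because every term has only finitely many $\prec_t$-smaller terms, the set $\condset{v}{v\preceq_t T}$ is finite; if no ground term were $\succ_t T$ then, by totality on ground terms, every ground term would lie in this finite set, leaving only finitely many ground terms. So the argument really needs \emph{infinitely many ground terms}, and I would make this reading of the infinitude hypothesis explicit. This is not a mere technicality: in a signature with only finitely many ground terms one can trap a variable strictly below the maximal ground term and still build a compatible, monotone $\prec_a$ with $A\prec_a B$ yet $\Var{A}\not\subseteq\Var{B}$, so the statement genuinely fails without enough ground terms. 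With the subterm ordering and any function symbol of positive arity over a non-empty set of ground terms, nesting produces arbitrarily large ground terms, which is precisely why the conclusion holds for the standard orderings KBO, LPO and RPO.
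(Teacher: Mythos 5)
Your proof is correct and takes essentially the same route as the paper's: both argue via the contrapositive, instantiate the variable in $\Var{A}\setminus\Var{B}$ with a ground term strictly larger than every argument of the grounded $B$ (such a term exists by totality on ground terms, the infinitude of terms, and the finiteness of each term's set of predecessors), and then use the subterm and compatibility conditions to obtain $B\sigma\prec_a A\sigma$, contradicting what monotonicity would give from $A\prec_a B$. The only difference is cosmetic: you make the final monotonicity/irreflexivity step explicit where the paper leaves it implicit.
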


\begin{proof}
  Assume there exists a term $t$ such that there does not exist $t'$
  with $t\prec_t t'$. Since the ordering is total on ground terms for
  every term $t'\neq t$ we have $t'\prec_t t$. Since there exists an
  infinite number of ground terms this contradicts the assumption that
  every term has only a finite number of terms smaller than itself.
  Thus for every term $t$ there exists a term $t'$ with $t\prec_t t'$.

  Now let $A$ and $B$ be two atoms, and assume
  $\Var{A}\not\subseteq\Var{B}$. Let $\sigma$ be a substitution
  grounding $B$, \emph{i.e.}, $B\sigma=b(s_1,\ldots,s_m)$. Wlog
  assume that $s_1$ is maximal among the $s_1,\ldots,s_m$ for the term
  ordering $\prec_t$. Let $t$ be a term greater than $s_1$. Let us
  extend $\sigma$ on $\Var{A}\setminus \Var{B}$ by a substitution
  $\tau$ mapping every $x\in\Var{A}\setminus \Var{B}$ to $t$. Since
  there is at least one occurrence of one such $x$ in $A$, and since
  the ordering is subterm, there exists a term $t'$ in $A\sigma\tau$
  that contains $t$ as a subterm. Since the ordering is subterm this implies $t
  \prec_t t'$. Since the ordering on ground atoms is compatible with
  the ordering on ground terms this implies $B\sigma \prec_a
  A\sigma\tau$. Thus $\Var{A}\not\subseteq\Var{B}$ implies
  $A\not\prec_a B$.
\end{proof}

Finally the assumptions employed in~\cite{Basin-Ganzinger-01} to derive
complexity results imply that the number of atoms smaller than a given
ground atom of size $n$ is in $\mathcal{O}(f(n))$ and such atoms may be
enumerated in time $\mathcal{O}(g(n))$ for two computable functions $f$
and $g$. Since we do not assume the same finiteness property we cannot
directly state complexity results. However we note that there is a lot
of works on the complexity analysis of term rewriting systems. While
these works aim at bounding the maximal length of a derivation, we
believe that it could still be useful to provide theoretic upper
bounds on the number of atoms smaller than the atoms in a fixed set
$C$ for the constructed ordering $\prec_{\mathcal{R}}$.

% \end{document}

\section{Conclusion}
\label{sec:conclusion}

We have presented in this paper an extension of a classical result by
Basin and Ganzinger~\cite{Basin-Ganzinger-01}. The relaxation of the
hypothesis on the ordering lead to a further extension for resolution
modulo an equational theory~\cite{Huet72,NR94,Vigneron94}. 
We note that the redundancy notion introduced in \cite{BachmairG90} is
based on an ordering of clauses as multisets of literals. A drawback of
the saturation procedure presented in this paper is that clauses are seen
as sets of literals; Thus we cannot apply as is their result of combination 
of saturation with subsumption.
We plan to prove in future works that it is possible to add to our saturation procedure a 
backward subsumption
rule while preserving the construction of the finite complexity atom ordering.

We believe the
technique employed can be extended to add a reflectivity or transitivity
axiom to an already saturated theory. Also, we thank Chris
Lynch~\cite{lynch-PC} for having pointed to us (by giving a
counter-example) that the method cannot be extended \textit{as is} to
superposition. Finally we believe that a consequence of our proof is
that saturated theories are complete for contextual
deduction~\cite{BR91,NO90}, which may help in the resolution of
\cite{rta-loop}, though further work is needed to confirm this conjecture.

\bibliography{bibliographie}
\end{document}